\DeclareMathOperator*{\maximize}{maximize}
\DeclareMathOperator*{\argmax}{argmax}
\DeclareMathOperator*{\diag}{diag}
\DeclareMathOperator*{\R}{\mathbb{R}}
\DeclareMathOperator*{\E}{\mathbb{E}}
\newenvironment{my_algorithm}{
	\fontfamily{\ttdefault} \selectfont
	\begin{enumerate}
		\setlength{\itemsep}{1pt}
		\setlength{\parskip}{0pt}
		\setlength{\parsep}{0pt}
}{
	\end{enumerate}
}
\begin{document}

\definecolor{review_red}{rgb}{0.9, 0.0, 0.0}
\definecolor{delred}{rgb}{1.0, 0.6, 0.6}
\definecolor{addgreen}{rgb}{0.0, 0.6, 0.0}
\newcommand{\tradd}[1]{#1}
\newcommand{\trdel}[1]{}
\newcommand{\trred}[1]{#1}

\title{Adaptive Hybrid Simulations for Multiscale Stochastic Reaction Networks}

\author{Benjamin Hepp}
\author{Ankit Gupta}
\author{Mustafa Khammash}
\affiliation{Department of Biosystems Science and Engineering (D-BSSE), ETH Zurich Switzerland}

\date{\today}

\begin{abstract}
The probability distribution describing the state of a Stochastic Reaction Network evolves according to the Chemical Master Equation (CME). It is common to estimated its solution using Monte Carlo methods such as the Stochastic Simulation Algorithm (SSA). In many cases these simulations can take an impractical amount of computational time. Therefore many methods have been developed that approximate the Stochastic Process underlying the Chemical Master Equation. Prominent strategies are Hybrid Models that regard the firing of some reaction channels as being continuous and applying the quasi-stationary assumption to approximate the dynamics of fast subnetworks. However as the dynamics of a Stochastic Reaction Network changes with time these approximations might have to be adapted during the simulation. We develop a method that approximates the solution of a CME by automatically partitioning the reaction dynamics into discrete/continuous components and applying the quasi-stationary assumption on identifiable fast subnetworks. Our method does not require user intervention and it adapts to exploit the changing timescale separation between reactions and/or changing magnitudes of copy numbers of constituent species. We demonstrate the efficiency of the proposed method by considering examples from Systems Biology and showing that very good approximations to the exact probability distributions can be achieved in significantly less computational time.
\end{abstract}

\providecommand{\keywords}[1]{\textbf{Keywords:} #1}

\smallskip
\setlength{\hangindent}{\parindent}
\keywords{adaptive hybrid models, stochastic reaction networks, 
quasi-stationary, adaptive piecewise deterministic markov processes}

\maketitle

\section{Introduction}

Chemical reaction networks, where a finite number of molecular species interact with each other through a fixed number of reaction channels, are an important tool
for modeling many biochemical systems.
The reaction dynamics is invariably \textit{noisy} due to the discrete nature of molecular interactions which causes
the timing of reactions to be random.
It is known that this noise can be neglected for systems with high copy-numbers of all the species,
and the dynamics can be modeled deterministically through a set of ordinary differential equations \cite{goutsias2007classical}.
However many biological systems involve molecular species with low copy-numbers and hence the randomness in the dynamics can have a significant impact on the properties of the system. This has been demonstrated in a number of biological systems,
such as gene expression \cite{mcadams1997stochastic}, piliation of bacteria \cite{munsky2005stochastic} and polarization of cells \cite{fange2006noise},
as well as in synthetic biological circuits such as the Toggle Switch \cite{Gardner2000} and the Repressilator \cite{elowitz2000synthetic}.

To understand the role of noise and its effects, reaction networks are commonly modeled as stochastic processes
with Markovian dynamics where the state represents the copy-numbers of the molecular species \cite{goutsias2013markovian}.
\trred{
The dynamics of the distribution of a Markov process representing the reaction network evolves according to the Chemical Master Equation (CME) which is a set of ordinary differential
equations \eqref{eq:chemical_master_equation}.
}
The size of this system is equal to the number of elements in the state space, which is typically infinite,
making the task of solving the CME practically impossible for most interesting systems.
One can solve a projection of the CME onto a finite subspace, but this only works for small systems \cite{munsky2006finite}.
Other methods are developed that allow the direct solution of the CME for specific classes of networks \cite{Kazeev2014}.

If the CME cannot be solved directly, one usually resorts to Monte Carlo methods
to generate trajectories of the underlying stochastic process
and approximate the probability
distribution through a large number of simulations.
These simulations can be performed using Gillespie's Stochastic Simulation Algorithm (SSA)
or its variants \cite{gillespie2007stochastic},
that generate exact sample paths by taking into account the firing of each reaction within the simulation time-period.
Several biological systems have reactions spanning a wide range of timescales, either due to
variation in the magnitude of the rate constants or variations in the copy-numbers of the constituent species.
If the system has reactions with fast timescales, then simulation schemes like SSA can take an impractical amount of time \cite{E2005, Cao2005}.

To handle this problem, approximate schemes such as $\tau$-leaping methods have been developed, that perform multiple reactions at each step \cite{gillespie2001approximate}.
Such methods are very efficient in simulating many systems with
high reaction rates and many enhancements have been proposed to automatically select a proper leap-size \cite{cao2006efficient,cao2007adaptive}
or deal with networks where the dynamics is ``stiff'' and some reactions occur on a very fast timescale \cite{rathinam2003stiffness}.

In most biological applications, one is interested in computing the probability distributions
described by a CME.
If we can capture the necessary stochasticity using approximate sample paths that are easier to simulate,
then  we can efficiently obtain a close approximation to the solution of a CME.
For this to work we need to identify and conserve the important sources of stochasticity while discarding the insignificant
stochastic effects.
The dynamical law of large numbers suggests that stochastic effects are less important for a species if the copy-numbers are large \cite{kurtz1978strong}.
For example, consider a simple gene-expression network, where DNA is transcribed into mRNA and mRNA is translated
into proteins. One might expect that the stochasticity of transcription and
translation of low copy-number DNA and mRNA can be important but the stochasticity of the high-copy
number proteins for downstream processes to be unimportant.
If the copy-numbers of \textit{all} the species are large, then under a suitable scaling of rate constants,
the process describing the
species concentrations \footnote{The concentration is the copy-number divided by the volume of the system},
converges in the limit of infinite copy-numbers to the reaction rate equations
that correspond to the deterministic model of the reaction network \cite{kurtz1978strong}.
Instead of scaling the copy-numbers of all the species uniformly and taking the limit, one can
construct processes where the species are partitioned into a set described by concentrations
and a set described by copy-numbers. Such a process has been shown to converge in the infinite copy-number limit
for a number of examples \cite{kang2013separation,crudu2009hybrid}.
The limiting processes in these cases are hybrid processes, that combine both deterministic and Markovian dynamics
and, can be called Piecewise Deterministic Markov processes (PDMP) \cite{davis1984piecewise}.

This idea has been exploited in various
hybrid schemes \cite{haseltine2002approximate,Pahle2002,Neogi2004,Bentele2005,salis2005accurate,crudu2012convergence,alfonsi2005,
Hoops2006,Griffith2006,Puchaka2004a,Burrage2004a,Harris2006}, which mostly differ in the way that the partitioning is performed and the way in which the hybrid system is simulated.
The partitioning of the species and/or reactions is either done manually or certain threshold-values are chosen for different properties of the system,
e.g.\ the copy-numbers of species \cite{Neogi2004,Pahle2002,Hoops2006}, the propensities of reactions \cite{haseltine2002approximate,Harris2006}, the approximated copy-number fluctuations of species \cite{Bentele2005} or a combination of species copy-numbers and reaction propensities \cite{salis2005accurate,alfonsi2005,Griffith2006,Burrage2004a,Puchaka2004a}.
For the simulation, various combinations of SSA, $\tau$-leaping, stochastic differential equations (SDE) and ordinary differential equations (ODE) are possible.
Most methods combine the SSA approach with a SDE or an ODE approach \cite{haseltine2002approximate,salis2005accurate,Griffith2006,alfonsi2005,Pahle2002,Hoops2006} but there
are also methods that introduce a regime for $\tau$-leaping \cite{Puchaka2004a,Burrage2004a,Harris2006}.
Another important point for the simulation is the way in which the timings of discrete events is approximated. When integrating
the system with $\tau$-leaping, SDE or ODE approach, the propensities of the discrete reactions change with time.
Incorporating these time-dependent propensities into the sampling of the discrete event times improves the accuracy of these methods.
As soon as the criteria for the partitioning of the system is defined, one can establish an adaptive scheme to repartition the system when necessary.
This is important for situations where the orders of magnitudes of the species copy-numbers change significantly over time,
a feature found in many important reaction networks from Systems Biology such as
transcriptional bursting in gene expression \cite{chubb2006transcriptional} or the synthetic Repressilator circuit \cite{elowitz2000synthetic}.
How to properly decide when a repartitioning is necessary remains an open issue.
We refer the reader to \cite{pahle2009biochemical_b} for a more comprehensive overview.

The dynamical law of large numbers can be exploited to approximate the dynamics
when reaction timescales differ due to variations in the copy-numbers of
the constituent species. As mentioned before, timescale separation between
reactions can also be caused due to differences in the magnitudes of the rate constants \cite{Cao2005}.
Some hybrid schemes simplify the dynamics in this situation by applying the quasi-stationary assumption
for the dynamics of fast subnetworks \cite{crudu2012convergence}.
This approach is justified in situations where the exact dynamical details of fast
subnetworks are unimportant, but only the distributions of the fast species
are required to correctly estimate the propensities of the slow reactions.
For example, in many biochemical systems the quasi-stationary assumption can
be applied to fast subnetworks consisting of enzyme-substrate interactions
and exact simulations of the fast dynamics can be avoided, significantly reducing
the computational effort \cite{Rao2003,Cao2005}.

In this paper we propose a new hybrid scheme to estimate solutions of the CME
corresponding to a stochastic reaction network, which may exhibit multiple
reaction timescales due to variation in both the magnitudes of rate constants and copy-numbers of the constituent species.
Our method relies on the rigorous mathematical framework recently provided
by Kang et al.\ \cite{kang2013separation} to simplify the dynamics for such networks.
In this framework they introduce scaling parameters for rate constants
as well as species copy-numbers and give formal criteria for convergence
to a PDMP.
Using Linear Programming we select these scaling parameters to fulfil their
criteria and then simulate the limiting PDMP.
When the magnitudes of species copy-numbers change significantly, our method
adapts itself by recomputing the appropriate scaling parameters and
continues the simulation with the corresponding PDMP.
In other words, our method dynamically stitches together several PDMPs to account for the variations
in the magnitudes of the species copy-numbers, and the resulting variations in the
timescales of various reactions.
The method can also automatically identify fast subnetworks and apply
the correct quasi-stationary assumption whenever possible.

The paper is organized as follows. In Section \ref{sec:MATHEMATICAL-OVERVIEW}
we give a mathematical background and introduce the necessary formalism
for the rest of the paper. In Section \ref{sec:IMPLEMENTATION} we
elaborate on the algorithm and provide implementation details
of our adaptive hybrid scheme.
In Section \ref{sec:COMBINATION_TAU_LEAP} we suggest a possible combination of our scheme with
$\tau$-leaping schemes.
\trred{In Section \ref{sec:NUMERICAL-EXAMPLES} we compare
our adaptive hybrid scheme with SSA and a fixed PDMP scheme (without dynamic repartitioning) for
three different examples. We also make a comparison with an existing hybrid scheme (which also dynamically repartitions).
Finally, in Section \ref{sec:CONCLUSIONS} we conclude and give an outlook for future work.
Additionally, in Appendix \ref{appendix:averaging} we explain the application of the quasi-stationary assumption and
in Appendix \ref{appendix:MATHEMATICAL_JUSTIFICATION} we provide a mathematical justification
for the correctness of our adaptive hybrid scheme.
}

\section{Mathematical Preliminaries \label{sec:MATHEMATICAL-OVERVIEW}}

\trred{
The goal of this section is to present the relevant mathematical background for this paper. We start by describing how the dynamics of a reaction network can be modeled as a continuous time Markov chain. Such models are called \emph{Stochastic Reaction Networks} (SRNs) in this paper. We then motivate and describe the multiscale modeling framework of Kang et al.\ \cite{kang2013separation}, which explicitly accounts for the variation in reaction timescales by considering both copy-number scales and the differences in magnitudes of rate constants. We also present the limit theorem proved in Kang et al.\ \cite{kang2013separation} which shows that under suitable assumptions, the dynamics of the underlying stochastic process is well-approximated by a \emph{Piecewise Deterministic Markov Process} (PDMP) which can be far easier to simulate than the original dynamics. Finally we end this section with a brief discussion on how the simplified dynamics can be obtained even in situations where the required assumptions for the limit theorem fail. In such cases it is often possible to first apply the quasi-stationary assumption
and then use the PDMP approximation as before.
}

\subsection*{Stochastic Reaction Network (SRN)}

\trred{
Consider a well-stirred chemical system with $n_{S}$ species
$\left\{ s_{1},\ldots,s_{n_{S}}\right\}$ which interact according to $n_{R}$ reaction channels
of the form
\[
\sum_{i=1}^{n_{S}}\nu_{ik}s_{i} \longrightarrow \sum_{i=1}^{n_{S}}\nu'_{ik}s_{i}, \quad k\in\left\{ 1,\ldots,n_{R}\right\}
\]
where $\nu_{ik}$ and $\nu'_{ik}$ denote the number of molecules of the $i$-th species that are consumed and produced by the $k$-th reaction. In the stochastic setting, the reaction dynamics can be represented as a continuous time Markov chain, whose state at any time $t$ is just the vector $x = \left(x_1,\ldots,x_{n_S}\right)$ of copy-numbers of all the species.
When the state is $x$, the $k$-th reaction occurs after a random time
that is exponentially distributed with rate $\lambda'_{k}\left(x\right)$, assuming no other reactions occur first. The function $\lambda'_{k}$ is called the \emph{propensity function} for the $k$-th reaction. We assume that the network only consists of elementary reactions\footnote{
IUPAC. Compendium of Chemical Terminology, 2nd ed. (the \textit{Gold Book}). Compiled by A. D. McNaught and A. Wilkinson. Blackwell Scientific Publications, Oxford (1997). XML on-line corrected version: http://goldbook.iupac.org
(2006-) created by M. Nic, J. Jirat, B. Kosata; updates compiled by A. Jenkins. ISBN 0-9678550-9-8. doi:10.1351/goldbook.
}
and the propensity functions follow mass action kinetics \cite{gillespie2009diffusional}, as shown in Table \ref{tab:mass_action_table}.
}
\begin{table}[h]
\begin{center}
\begin{tabular}{|c|c|c|}
\hline
Reaction type & Consumption stoichiometry & Propensity function \tabularnewline
\hline
Constitutive & $\nu_{k} = 0$ & $\lambda'_{k}\left(x\right) = \kappa'_{k}$ \tabularnewline
Monomolecular & $\nu_{k} = e_{i}$ & $\lambda'_{k}\left(x\right) = \kappa'_{k} x_{i}$ \tabularnewline
Bimolecular & $\nu_{k} = e_{i} + e_{j}$, $i \neq j$ & $\lambda'_{k}\left(x\right) = \kappa'_{k} x_{i} x_{j}$ \tabularnewline
Bimolecular & $\nu_{k} = 2 e_{i}$ & $\lambda'_{k}\left(x\right) = \frac{1}{2} \kappa'_{k} x_{i} (x_{i} - 1)$ \tabularnewline
\hline
\end{tabular}
\end{center}
\caption{Description of mass action kinetics. Here $e_{i}$ is the $i$-th standard basis vector in $\mathbb{R}^{n_{S}}$.}
\label{tab:mass_action_table}
\end{table}

\trred{
The CME corresponding to this SRN is given by
\begin{equation}
\label{eq:chemical_master_equation}
\frac{dp\left(x,t\right)}{dt}=\sum_{k=1}^{n_{R}}\lambda'_{k}\left(x-\xi_{k}\right)p\left(x-\xi_{k},t\right)-\sum_{k=1}^{n_{R}}\lambda'_{k}\left(x\right)p\left(x,t\right)
\end{equation}
where $\xi_{k}=\nu'_{k}-\nu_{k}$, $\nu_k = (\nu_{1k} ,\dots, \nu_{ n_s k}  )$ and $\nu'_k = (\nu'_{1k} ,\dots, \nu'_{ n_s k}  )$.
In this equation, $p\left(x,t\right)$ is the probability $\mathbb{P}\left(X\left(t\right) = x\right)$, where 
$\left\{X\left(t\right): t \geq 0\right\}$ is the Markov process describing the reaction dynamics. This process satisfies the following  random time-change representation \cite{ethier2009markov}
\begin{align}
\label{eq:random_time_SRN}
\begin{split}
X\left(t\right) = x(0) + \sum_{k=1}^{n_{R}} Y_{k}\left(\int_{0}^{t}\lambda'_{k}\left(X(s)\right)ds\right) 
  \xi_{k}
\end{split}
\end{align}
where $\left\{Y_{k}: k=1,\ldots,n_{R}\right\}$ is a family of independent unit rate Poisson processes.
}

\trred{
In order to understand the behavior of a SRN it is important to compute the distributions that evolve according to the CME. However note that the CME consists of as many equations as the number of elements in the state space of the reaction dynamics, which is typically very large or infinite. This makes CME practically impossible to solve in most cases and therefore its solution is usually estimated using Monte Carlo methods such as
Gillespie's SSA \cite{gillespie2007stochastic} which generates exact paths of the underlying stochastic dynamics. As mentioned before, many biological networks contain reactions firing in several different timescales, and in such situations these exact Monte Carlo methods can be computationally intractable due to the ``stiffness'' of the system \cite{rathinam2003stiffness}. In these cases, it is sometimes possible to approximate the solutions of a CME using a simplified description of the dynamics, which is obtained by preserving the important sources of stochasticity in the dynamics while discarding the rest. One way to obtain such simplified dynamics is to use the multiscale modeling framework of Kang et al.\ \cite{kang2013separation} and show that under certain conditions, the orginal dynamics is well-approximated by a PDMP \cite{davis1984piecewise} which is often much easier to simulate. We now describe this multiscale modeling framework.
}

\subsection*{Multiscale models of Stochastic Reaction Networks}
\label{sec:multiscale_framework}

\trred{
A multiscale SRN is characterized by reactions firing at many different timescales. This variation in timescales could be both due to variation in copy-number scales as well as variation in the magnitudes of the rate constants. Kang et al.\ \cite{kang2013separation} present a framework for separating both these sources of timescale variation, and obtain PDMP approximations of the dynamics under certain conditions. To apply this framework we first select a large positive number $N_{0}$ that reflects the typical copy-number of a species which is considered abundant in the network. Note that the PDMP approximation that we later describe would ignore the stochasticity in copy-numbers of species if they are of order $N_0^{\alpha}$ for some $\alpha >0$. Hence the accuracy of this approximation and our method which relies on it, crucially depends on the choice of $N_{0}$.
}

\trred{
As an example, consider a gene expression network where the number of genes
is usually $1$ or $2$, the number of mRNAs is in the order of tens and the number of proteins is in the order of thousands.
To capture the bursting behavior of stochastic gene expression \cite{chubb2006transcriptional} we want to keep the
discreteness of the genes and the mRNAs.
However, the discreteness of reaction channels driven by proteins can often be neglected, as the fluctuations of the protein numbers
are rather small compared to the total protein number. Thus we want to approximate the number of proteins as a continuous
quantity and reaction channels driven by proteins as occurring continuously. For this purpose we can choose $N_{0} = 1000$.
}

\trred{
From now on, we refer to a quantity as order $1$ (denoted as $O(1)$) if it remains bounded as $N_0$ gets larger. Assume that the rate constants $\kappa'_{k}$ have different orders of magnitude. For each reaction $k=1,\ldots,n_{R}$ we pick a $\beta_{k} \in \mathbb{R}$, such that $\kappa_{k} = N_{0}^{-\beta_{k}} \kappa'_{k}$ is $O(1)$. We can interpret $\beta_{k}$ as the timescale of the $k$-th reaction, when all the reactants have copy-numbers of order one. 
To account for the variation in the copy-number scales of species we introduce
scaling parameters $\alpha_{i} \geq 0$ for $i=1,\ldots,n_{S}$ such that
$Z_{i}^{N_{0}} \left(t\right) = N_{0}^{-\alpha_{i}} X_{i}\left(t	\right)$
is $O(1)$, at least for $t$ close to $0$. The choice of scaling parameters $\alpha_{i}$-s and $\beta_{k}$-s is not arbitrary as they must satisfy certain conditions that we later describe. In Section \ref{subsec:computation_scaling_parameters} we will present an automatic scheme for selecting these parameters in a suitable way.
}

\trred{
Using these scaling parameters $\alpha_{i}$-s and $\beta_{k}$-s, we can derive the scaled process $\{ Z^{N_{0}}\left(t\right) = (Z_{1}^{N_{0}}\left(t\right),\ldots,Z_{n_{S}}^{N_{0}} (t) ) : t \geq 0 \}$ from the original process $\{ X\left(t\right) : t \geq 0 \}$.
Replacing $N_{0}$ by $N$ we obtain a family of processes $\left\{Z^{N}\left(t\right) : t \geq 0\right\}$
parametrized by $N$. If we can show that the process $Z^{N}$ converges in distribution to another process $Z$ as $N\rightarrow\infty$, then for large values of $N_{0}$, $Z_{i}^{N_{0}}\left(t\right) \approx Z_{i}\left(t\right)$
and hence $X_{i}\left(t\right)\approx N_{0}^{\alpha_{i}} Z_{i}\left(t\right)$ for all $i$.
This gives us a way to approximate our original process $X$ with another process
$Z$ which can be much simpler to simulate. In the classical thermodynamic limit \cite{kurtz1978strong},
$N_{0}$ is the volume of the system, $\alpha_{i} = 1$ for each $i$
and $\beta_{k} = 1, 0, -1$ depending on whether the $k$-th reaction is constitutive, monomolecular or bimolecular.
In this case the limiting process $Z$ corresponds to the deterministic model of the reaction network and its evolution is given by a systems of ODEs called the reaction rate equations. However for many biochemical networks this classical scaling is not suitable, but a general scaling prescribed by $\left(\alpha_{i},\beta_{k}\right)$ can be used. We now present the main convergence result on which our method is based.
}

\subsection*{Convergence to a PDMP}
\label{subsec:PDMP}

\trred{
Under the scaling prescribed by $\left(\alpha_{i},\beta_{k}\right)$, the random time-change representation of the process
$\{ Z^{N} (t) = (Z_{1}^{N}(t),\ldots,Z_{n_{S}}^{N} (t) ) : t \geq 0 \}$ is given by
\begin{align}
\label{eq:multiscale_random_time_change}
Z_{i}^{N}(t) = Z_{i}^{N}(0) + N^{-\alpha_{i}}\sum_{k=1}^{n_R}Y_{k}\left(\int_{0}^{t}N^{\beta_{k}+\alpha\cdot\nu_{k}}\lambda_{k}^{N}\left(Z^{N}(s)\right)ds\right)\xi_{ik}
\end{align}
where $Z_i^N (0) = N_{0}^{-\alpha_i} X_i (0)$ and $\alpha\cdot\nu_{k}$ denotes the dot product of vectors $\alpha = \left(\alpha_{1},\ldots,\alpha_{n_{S}}\right)$
and $\nu_{k}$.
The $\lambda_{k}^{N}$-s are defined by
}
\begin{center}
\begin{tabular}{|c||c|c|c|c|}
\hline
Reaction & $\emptyset\rightarrow\cdots$ & $S_{i}\rightarrow\cdots$ & $S_{i}+S_{j}\rightarrow\cdots$ & $2S_{i}\rightarrow\cdots$\tabularnewline
\hline 
$\lambda_{k}^{N}\left(z\right)$ & $\kappa_{k}$ & $\kappa_{k}z_{i}$ & $\kappa_{k}z_{i}z_{j}$ & $\kappa_{k}z_{i}(z_{i}-N^{-\alpha_{i}})$\tabularnewline
\hline 
\end{tabular}
\end{center}
\trred{
with $\kappa_{k}=N_0^{-\beta_{k}}\kappa'_{k}$.  In Theorem 4.1.\ Kang et al.\ \cite{kang2013separation}\ prove that the process $Z^{N}$ converges to a well-behaved process $Z$ as $N\rightarrow\infty$, if $\left(\alpha_{i},\beta_{k}\right)$ satisfy
\begin{align}
\label{eq:multiscale_constraints}
\alpha_{i} \geq \beta_{k} + \alpha\cdot\nu_{k} \qquad \text{ for each } i, k \text{ with } \xi_{ik} \neq 0.
\end{align}
Moreover the limiting process $Z$ satisfies
\begin{align}
\label{eq:random_time_PDMP}
Z\left(t\right) = x(0)
 + \sum_{k\in R_{D}}Y_{k}\left(\int_{0}^{t}\lambda_{k}\left(Z(s)\right)ds\right) \xi_{k}
 + \sum_{k\in R_{C}} \left( \int_{0}^{t}\lambda_{k}\left(Z(s)\right)ds \right)
  \xi_{k},
\end{align}
where $\left\{Y_{k}:  k \in R_{D} \right\}$ is a family of independent unit rate Poisson processes, and reaction subsets $R_{C}  , R_{D} \subset \left\{1,\ldots,n_{R}\right\}$ are defined by
\begin{align}
\label{eq:multiscale_partitioning}
R_{C} = & \left\{ k: \alpha_{i} = \beta_{k} + \alpha \cdot \nu_{k} > 0 \ \text{ for all } i \text{ such that } \xi_{ik} \neq 0 \right\}
\text{ and} \notag \\
R_{D} = & \left\{ k \not \in R_{C}: \alpha_{i} = \beta_{k} + \alpha \cdot \nu_{k} = 0 \ \text{ for all } i \text{ such that } \xi_{ik} \neq 0 \right\}.
\end{align}
}

\trred{
The limiting process $Z$ described by \eqref{eq:random_time_PDMP} is essentially a \emph{Piecewise Determinitic Markov Process} (PDMP) because a part of the dynamics, given by reactions in $R_C$ evolves continuously (as in an ODE), while another part of the dynamics,  given by reactions in $R_D$ evolves discretely like a jump Markov process similar to our original process $X$. Note that in general $R_{C} \cup R_{D} \neq \left\{1,\ldots,n_{R}\right\}$ and all reactions not in $R_{C} \cup R_{D}$ do not contribute to the dynamics of the limiting process $Z$.  
}

\trred{
With the scaling parameters $\left(\alpha_{i},\beta_{k}\right)$ we also define subsets of species by
\begin{align}
\label{eq:multiscale_species_partitioning}
S_{C} = & \left\{ i \in \left\{1,\ldots,n_{S}\right\}: \alpha_{i} > 0 \right\} \notag \\
S_{D} = & \left\{1,\ldots,n_{S}\right\} \setminus S_{C} .
\end{align}
The species in $S_{C}$ change due to reactions in $R_{C}$ and therefore they are measured by ``concentrations'' and evolve continuously,
whereas the species in $S_{D}$ will have discrete state values given by their copy-number.
The dynamics of the limiting process $Z$ can then be written as
\begin{align}
\label{eqn:dynamics_partitioning}
Z_{C}\left(t\right) & = x_{C}(0)
 + \sum_{k\in R_{C}} \left( \int_{0}^{t}\lambda_{k}\left(Z(s)\right)ds \right) \xi_{k}
  \notag \\
Z_{D}\left(t\right) & = x_{D}(0)
 + \sum_{k\in R_{D}}Y_{k}\left(\int_{0}^{t}\lambda_{k}\left(Z(s)\right)ds\right) 
  \xi_{k}
\end{align}
where $Z_{C} \left(t\right)$ and $Z_{D} \left(t\right)$ are vectors with entries $Z_{i}\left(t\right)$ for $i\in S_{C}$ and $i \in S_{D}$ respectively.
The definitions of $x_{C}(0)$ and $x_{D}(0)$ are similar.
}

\trred{
To simulate such a PDMP, the dynamics of the reactions in $R_{C}$ are basically an \emph{Ordinary Differential Equation} (ODE) and the reactions in $R_{D}$ can modify its vector field at random times.
The ODE can be solved with established and efficient ODE solvers and for large propensities of the reactions channels in $R_{C}$, this can be computationally much cheaper than simulating
the discrete individual reactions as Gillespie's SSA \cite{gillespie2007stochastic}.
Therefore the speed enhancements that our method obtains (by simulating a PDMP \eqref{eq:random_time_PDMP}) in comparison to SSA, is directly
proportional to the number of reactions treated continuously.
More details on simulating a PDMP are provided in section \ref{sec:IMPLEMENTATION}.
}

\trred{
For the above convergence result to hold, we need to find scaling parameters $\left(\alpha_{i}, \beta_{k}\right)$ such that \eqref{eq:multiscale_constraints} is satisfied.
Note that such a choice always exists as we can simply set each $\alpha_i$ and $\beta_{k}$ to $0$ to satisfy these contraints. However with such a trivial choice of scaling parameters, the limiting process $Z$ is same as the original process $X$ and hence we do not obtain any computational advantage in simulating the limiting process. In Section \ref{subsec:computation_scaling_parameters} we describe how these parameters can be automatically chosen using Linear Programming in such a way, that the limiting dynamics is computationally much easier to simulate.
The key idea is to treat as many reactions continuously as possible without violating the constraints that imply the PDMP convergence result.
We also discuss how they can be properly adapted if the copy-number scales of species vary significantly with time.
}

 \subsection*{Applying the quasi-stationary assumption}
\label{subsec:qsa}

\trred{
Consider the situation where certain species have low copy-numbers but their dynamics is affected by reactions with large rate constants $\kappa'_{k}$. In this situation, an appropriate copy-number scaling parameter $\alpha_{i}$ for such species would be close to $0$, since copy-numbers are small, while an appropriate rate constant scaling parameter $\beta_{k}$ for these reactions would be strictly positive.
As a consequence, for these low-copy species with \emph{fast} dynamics (or simply \emph{fast} species), the constraits \eqref{eq:multiscale_constraints} will fail to hold for scaling parameters
$\left(\alpha_{i}, \beta_{k}\right)$ that truly represent the dynamics. If we use our automatic parameter selection procedure given in Section \ref{subsec:computation_scaling_parameters}, then all the reactions involved in changing these \emph{fast} species will have their $\beta_{k}$-s set to a negative or a small positive value even though their rate constants $\kappa'_{k}$-s are large. Hence our method will not be able to treat these fast reactions continuously and therefore it will suffer from the same stiffness problems as Gillespie's SSA \cite{Cao2005}.
}

\trred{
To remedy this problem we use another result in Kang et al.\ \cite{kang2013separation}, which shows that even with such \emph{fast} species in the network,
well-behaved limits can be obtained for certain projections of the process $Z^{N}$.
Such projections correspond to ``reduced'' reaction networks
which only consist of those species that satisfy \eqref{eq:multiscale_constraints}.
These reduced models can also consist of linear combinations of \emph{fast} species that satisfy a constraint analogous to \eqref{eq:multiscale_constraints}. To obtain these reduced models along with their limits,
one needs to find subnetworks of fast reaction channels where the quasi-stationary
assumption can be applied.
Assume that we have such a subnetwork whose internal dynamics is much faster than the surrounding dynamics.
If the subnetwork dynamics converges to a stationary distribution
then we can use this distribution to estimate the propensities of
reactions emanating from this subnetwork \cite{pelevs2006reduction, radulescu2008robust,crudu2012convergence}.
In this approach we assume that the species influenced by these fast reaction channels
are always at stationarity and hence we do not capture their transient dynamics,
thereby saving a lot of computational time \cite{Rao2003,Cao2005}.
The application of quasi-stationary approximation is also called
averaging in the mathematical literature.
We discuss this approach in greater detail in Appendix \ref{appendix:averaging}.
}

\color{black}
\section{Implementation\label{sec:IMPLEMENTATION}}

The aim of this section is to provide full implementation details
of our method for approximating the solution of a CME.
The presented algorithms compute a single sample path of the dynamics. The solution of a CME at a certain time $t$ can be approximated by generating
several sample paths and computing the histogram of the sample-path values at time $t$.
We start by introducing the well-known scheme for simulating a PDMP,
where the partitioning of reactions/species into discrete/continuous subsets is fixed.
We then describe our adaptive hybrid scheme, where this partitioning changes with time
due to the variation in the orders of magnitude of species copy-numbers.
This partitioning is based on the scaling parameters $\alpha_{i}, \beta_{k}$,
mentioned before, and we describe
how these parameters can be appropriately chosen using Linear Programming.
In both these approaches (Fixed and Adaptive PDMP) computational time can be saved
by applying the quasi-stationary assumption on fast subnetworks
consisting of discrete species and discrete reactions.
We explain this procedure and indicate how it can be integrated into our method.

\trred{The inputs that are the same to all the algorithms are given by:
\begin{description}
	\setlength{\itemsep}{0pt}
	\item[$\mathbf{t_0}$] initial time of simulation
	\item[$\mathbf{t_f}$] final time of simulation
	\item[$\mathbf{z_0}$] initial state vector of simulation
	\item[$\nu_{ik}$ for $i\in\left\{1,\ldots,n_{S}\right\}$ and $k\in\left\{1,\ldots,n_{R}\right\}$] the consumption stoichiometries
	\item[$\nu'_{ik}$ for $i\in\left\{1,\ldots,n_{S}\right\}$ and $k\in\left\{1,\ldots,n_{R}\right\}$] the production stoichiometries
	\item[$\kappa'_{k}$ for $k\in\left\{1,\ldots,n_{R}\right\}$] the unscaled reaction rate constants
	\item[$N_{0} \in \mathbb{N}$] copy-number scale considered to be large (default value $N_{0} = 1000$)
\end{description}
The adaptive algorithm needs two additional inputs:
\begin{description}
	\item[$\mu$] continuous copy-number scale threshold (default value $\mu$ = 0.5)
	\item[$\eta$] adaptation scale threshold (default value $\eta = 0.9$)
\end{description}
These are described in more detail in the following sections.
}

We will use the following notation throughout this section:
\begin{description}
	\setlength{\itemsep}{0pt}
	\item[$\mathbf{P}$] total number of firings considering all discrete reactions occurring from $t_{0}$ until $t_{f}$
	\item[$\mathbf{t_{r} (p)}$] time of the $p$-th firing considering all discrete reactions for $p=1,\ldots,P$.
		We define $t_{r}\left(0\right)=t_{0}$ and $t_{r}\left(P+1\right) = \infty$.
	\item[$\mathbf{R_{D} (t})$] the set of discrete reaction channels at time $t$
	\item[$\mathbf{R_{C} (t)}$] the set of continuous reaction channels at time $t$
	\item[$\mathbf{S_{D} (t})$] the set of discrete species at time $t$
	\item[$\mathbf{S_{C} (t)}$] the set of continuous species at time $t$
	\item[$\mathbf{z\left(t\right)}$] the state vector at time $t$
\end{description}
Note that the entries in the state vector $z\left(t\right)$ corresponding to
$S_{D}\left(t\right)$ are non-negative integers representing the species copy-numbers
whereas the entries corresponding to
$S_{C}\left(t\right)$ are non-negative real numbers representing the scaled species copy-numbers.

\subsection*{Simulation of Fixed Piecewise Deterministic Markov Processes}

\trred{
Consider a typical PDMP model where the continuous and discrete reaction subsets, given by $R_{C}\left(t\right)$ and $R_{D}\left(t\right)$ respectively,
and the induced species partition $S_{C}\left(t\right)$ and $S_{D}\left(t\right)$ do not depend on the time $t$.
}
The usual algorithm to simulate such models is to evolve the part
of the model that is described deterministically until the next discrete
reaction occurs. When a discrete reaction occurs, the copy-numbers are updated accordingly.
This is repeated until the end-timepoint of the simulation
is reached. To evolve the deterministic system we use existing ODE solvers.
\trred{\footnote{\trred{The implementation allows different ODE solvers to be used. However, for the examples we used the Dormand-Prince 5 stepper from the boost odeint library (\url{http://www.boost.org/}).
The boost odeint library is coupled to our implementation via a Java Native Interface wrapper}}}

The time increment $\Delta t_{p} = t_{r}\left(p\right) - t_{r}\left(p - 1\right)$
between the $(p - 1)$-st and $p$-st reaction is defined by the stopping condition
\begin{align}
\label{eq:PDMP_stopping_condition}
\int_{t_{r}\left(p-1\right)}^{t_{r}\left(p-1\right) + \Delta t_{p}} \sum_{k\in R_{D}}\lambda_{k}\left(z\left(t\right)\right) = u_{p}
\end{align}
where $u_{p} = - \log\left(q_{p}\right)$, $q_{p} \sim \mathcal{U}\left[0,1\right]$ and $\mathcal{U}\left[0,1\right]$ is the uniform distribution over $[0,1]$.
At time $t_{r}\left(p\right)$ the $k$-th reaction in $R_{D}$ fires with probability
\begin{align}
\label{eq:PDMP_reaction_probability}
\frac{\lambda_{k}\left(z\left(t_{r}\left(p\right)\right)\right)}
	{\sum_{m\in R_{D}} \lambda_{m}\left(z\left(t_{r}\left(p\right)\right)\right)} \quad .
\end{align}
For time $t_{r}\left(p - 1\right) < t < t_{r}\left(p\right)$, only the states of continuous species will change
according to the ordinary differential equation given by
\begin{align}
\label{eq:PDMP_vector_field}
\frac{d}{dt}z\left(t\right) = \sum\limits_{k\in R_{C}} \lambda_k\left(z\left(t\right)\right) \xi_{k} \quad .
\end{align}

We start the simulation at time $t = t_{0}$ and $z\left(t_{0}\right) = z_{0}$ and
generate a random variable $u_{1}$ with distribution $\mathcal{U}\left[0,1\right]$.
We evolve the state $z\left(t\right)$ according to \eqref{eq:PDMP_vector_field} until
the first discrete reaction occurs at
time $t_{r}\left(1\right) = t_{0} + \Delta t_{1}$, defined by \eqref{eq:PDMP_stopping_condition}.
We sample the reaction $k$ from $R_{D}$ according to \eqref{eq:PDMP_reaction_probability} and
set $z\left(t_{r}\left(1\right)\right) = z\left(t_{r}\left(1\right) - \right) + \xi_{k}$.
These steps are repeated until we reach the final time $t_{f}$.
Below we give an algorithmic description of the procedure we just described.
\medskip

\textbf{Algorithm 1: Fixed PDMP}
\begin{my_algorithm}
	\item $t\gets t_0$, $z\left(t_0 \right)\gets z_0$, $p\gets 1$
 	\item \textbf{while} \trred{$t < t_{f}$} \textbf{do}
		\item \ \% Set up random variable for the stopping condition \eqref{eq:PDMP_stopping_condition}
		\item \ $u_{p} \gets - \log\left( q_{p} \right)$ where $q_{p} \sim \mathcal{U}\left[0,1\right]$
		\item \ \textbf{evolve}
			\item \ \ $z$ according to $\frac{d}{dt}z\left(t\right) = \sum\limits_{k\in R_{C}} \lambda_k\left(z\left(t\right)\right) \xi_{k}$
		\item \ \textbf{until} \trred{$t=t_{f}$} or $t = t_{r}\left(p\right)$ according to \eqref{eq:PDMP_stopping_condition}
		\item \ \textbf{if} $t = t_{r}\left(p\right)$ \textbf{then}
			\item \ \ \% Determine which discrete reaction occurs at $t$ and \\update state vector
			\item \ \ Sample $r$ with $\text{Pr}\left(r = k\right) \sim \lambda_k\left(z\left(t\right)\right), k \in R_{D}$
			\item \ \ $z\left(t\right) \gets z\left(t-\right) + \xi_r$
			\item \ \ $p \gets p + 1$
		\item \ \textbf{end if}
	\item \textbf{end while}
\end{my_algorithm}

\subsection*{Adaptive Simulation of Piecewise Deterministic Markov Processes}

For SRNs that show big variation in copy-numbers over
time, it may be impossible to find a fixed partitioning of the
reaction and species sets that works well until the final time $t_{f}$.
This motivated us to extend the fixed PDMP algorithm into an adaptive
scheme, where the partitioning can change with time, depending on the
copy-number scales of different species. The general idea
is to define bounds for the copy-numbers in a suitable manner and
upon leaving these bounds the partitioning of the reaction
and species sets is updated according to the current copy-number scales.

The copy-number bounds are defined by parameters $\mu\ge0$ and $\eta\ge0$.
For a discrete species the
upper bound is $N_{0}^{\mu}$, while for a continuous species the lower
and upper bounds are $N_{0}^{-\eta}$ and $N_{0}^{\eta}$ respectively.
Thus the parameter $\mu$ is used to decide when a species can be considered
continuous and the parameter $\eta$ gives the range of re-scaled
copy-numbers where no adaptation is performed.
\trred{We use $\mu = 1$ and $\eta = 0.9$ as default parameter values. Only in the case
of too few or too many adaptations do we advise to try different values for $\eta$.
Note that even though the performance of our algorithm depends on the choice of $\mu$ and $\eta$,
this dependence is relatively small.}
\trred{
For each $i\in\left\{1,\ldots,n_{S}\right\}$ define a set
\begin{equation}
\label{eq:copy_number_bounds}
B_{i}^{N_{0}} = \begin{cases}
\left[ 0, N_{0}^{\mu} \right] & \text{ if } i \in S_{D} \\
\left[ N_{0}^{-\eta},  N_{0}^{\eta} \right] & \text{ if }i\in S_{C}
\end{cases} \quad
\end{equation}
and update the partitioning of the reactions and species if the value of $x_{i}$ leaves the set $B_{i}^{N_{0}}$.
}

We implemented our adaptive simulation scheme both with a fixed time step and an adaptive time step.
Here we describe the fixed time step algorithm with a time step $dt$:
We define a subset $R_{Q} \subseteq R_{D}$ of discrete reactions
that influence the continuous dynamics
\[
R_{Q} = \left\{ k\in R_{D}: \xi_{ik} \neq 0 \text{ and } \nu_{il} \neq 0 \text{ for some } l\in R_{C} \right\} \quad .
\]
After an integration step of the continuous dynamics has been performed, the next time $t_{R}$ of a discrete reaction can be computed
by using the sum of discrete propensities $w$ computed in the subroutine \textsf{integrateStep}.
This is repeated until either the next discrete reaction time $t_{R}$ would exceed the current time of the integration $t_{N} = t + dt$, or
the discrete reaction belongs to $R_{Q}$. If the discrete reaction belongs to $R_{Q}$ then the current time of integration $t_{N}$ is set
to the discrete reaction time $t_{R}$.

If the copy-number bounds in relation \eqref{eq:copy_number_bounds} become invalid,
an adaptation procedure is performed where the scaling parameters $\alpha_{i}, \beta_{k}$
are recomputed, the state of the system is scaled accordingly and the averaging of fast subnetworks
is performed, if possible.
The adaptation and averaging procedures are described in more detail
in the following sections.

\medskip

\textbf{Algorithm 2: Adaptive PDMP}

\begin{my_algorithm}
	\item $t\gets t_0$, $z\gets z_0$, $w\gets 0$
	\item $u \gets - \log\left( q \right)$ where $q \sim \mathcal{U}\left[0,1\right]$
	\item \label{alg:outer_loop} \textbf{while} \trred{$t_0 < t_{f}$} \textbf{do}
		\item \ $t_N, [z_N, w_N]\gets \text{integrateStep(t, [z, w])}$
		\item \ $\Delta z = 0$
		\item \ \textbf{while} $w_N \geq u$ \textbf{do}
			\item \ \ $t_R\gets \text{findRoot}(t, w - U, t_N, w_N - U)$
			\item \ \ [$z_R, w_R] \gets \text{interpolate}(t_R, t, [z, w], t_N, [z_N, w_N])$
			\item \ \ $r\gets k \text{ with probability } p_k\propto \lambda_k\left(z_R + \Delta z\right)$
			\item \ \ $u\gets u - \log\left( q \right)$ where $q \sim \mathcal{U}\left[0,1\right]$
			\item \ \ \textbf{if} $r \in R_{Q}$ \textbf{then}
				\item \ \ \ $z_R\gets z_R + \xi_r$
				\item \ \ \ $t_N\gets t_R$, $z_N\gets z_R$
				\item \ \ \ Goto step \ref{alg:inner_loop_end}
			\item \ \ \textbf{else}
				\item \ \ \ $\Delta z\gets \Delta z + \xi_r$
				\item \ \ \ $t\gets t_R$, $z\gets z_R$
				\item \ \ \ $w_N\gets w_R + \sum\limits_{k\in R_{D}} \lambda_k (z_R + \Delta z) \times (t_N - t_R)$
			\item \ \ \textbf{end if}
		\item \ \textbf{end while}
		\item \label{alg:inner_loop_end} \ $t\gets t_N$, $z\gets z_N + \Delta z$, $w\gets w_N$
		\item \ \textbf{for each} $i \in S_{D} \cup S_{C}$ \textbf{do}
			\item \ \ \textbf{if} $z_i$ violates \eqref{eq:copy_number_bounds}
				\item \ \ \ Perform adaptation (Algorithm 3)
				\item \ \ \ Goto step \ref{alg:outer_loop}
			\item \ \ \textbf{end if}
		\item \ \textbf{end for}
	\item \textbf{end while}
\end{my_algorithm}

\subsection*{Computation of scaling parameters}
\label{subsec:computation_scaling_parameters}

Our method relies on computing the PDMP approximation based
on the framework by Kang et al.\cite{kang2013separation}\ described before.
For this we need to pick the scaling parameters $\alpha_{i}$-s and $\beta_{k}$-s such that
the PDMP approximation is justified.
Moreover, to enhance the computational efficiency, we would
also like to maximize the number of reactions that can be
treated continuously.
We achieve these two objectives by formulating the
conditions given by \eqref{eq:multiscale_constraints}
as constraints of a linear program that maximizes a suitably chosen weighted sum of
the $\alpha_{i}$-s and $\beta_{k}$-s.
Once these parameters have been chosen, the result of Kang et al.\cite{kang2013separation}\ 
shows that the PDMP approximation must be partitioned according to
\eqref{eq:multiscale_partitioning}.
However such a strict partitioning scheme is often inappropriate from the standpoint of numerical simulations,
because the $\alpha_{i}$-s and $\beta_{k}$-s are real numbers.
Hence we relax this partitioning criteria by introducing another parameter $\delta$ (with $\delta = 1$ as the default value)
which decides when a reaction is considered continuous or discrete.
Formally, we define
\begin{align}
\label{eq:multiscale_partitioning_implementation}
R_{C}\left(t\right) = \{k: \alpha_{i}>\delta\ \forall i \text{ s.t. } \xi_{ik}\neq 0\} \quad \text{ and } \quad
R_{D}\left(t\right) = \{k: \exists i: \alpha_{i}\leq\delta,\xi_{ik}\neq 0\}.
\end{align}
The partitioning of the species set into discrete and continuous subsets $S_{D}\left(t\right)$ and $S_{C}\left(t\right)$
is defined in Section \ref{subsec:PDMP}.

We now elaborate on the linear program that is solved to find the
scaling parameters at time $t$.
For each $i\in \left\{1,\ldots,n_{s}\right\}$ let $x_{i} = N_{0}^{\alpha_{i}} z_{i}\left(t\right)$
where the $\alpha_{i}$-s are the scaling parameters before time $t$.
For each $i\in\left\{ 1,\ldots,n_{S}\right\}$ and each $k\in\left\{ 1,\ldots,n_{R}\right\}$
define
\[
A_{i}=\frac{\log(x_{i})}{\log(N_{0})} \quad \text{ and } \quad B_{k}=\frac{\log(\kappa'_{k})}{\log(N_{0})}.
\]
Here $A_{i}$ gives the copy-number scale of the $i$-th species at time $t$,
while $B_{k}$ is a time-independent quantity
that gives the natural timescale of the $k$-th reaction, if all its reactant species have copy-numbers of order one.
We compute the new scaling parameters by solving the following linear program
\begin{center}
\begin{equation}
\begin{array}{cc}
\maximize\limits_{\alpha_i, \beta_k} & \psi \sum\limits_{i=1}^{n_{S}}\frac{\alpha_{i}}{A_{i}}+\sum\limits_{i=1}^{n_{R}}\frac{\beta_{k}}{B_{k}}\\
\\
\text{subject to } & 0\leq\alpha_{i}\leq A_{i}, \beta_{k}\leq B_{k} \quad \forall i, k\\
\text{and } & \alpha_{i}\geq\beta_{k}+\alpha\cdot\nu_{k} \quad \forall i, k \text{ with } \xi_{ik} \neq 0 \quad.
\end{array}\label{eq:adaptation_linear_program}
\end{equation}
\par\end{center}
Essentially we are trying to maximize the
values of the scaling parameters $\alpha_{i}$, $\beta_{k}$ such that
the constraints \eqref{eq:multiscale_constraints} are satisfied
and $\alpha_{i}$ approximately captures the copy-number scale of the $i$-th species
at time $t$ while $\beta_{k}$ approximately captures the natural timescale
of the $k$-th reaction.
Maximizing these scaling parameters allows us to
treat more reactions and species continuously,
and therefore enhance the speed of our simulations.
We weigh the first term in the objective function by
$\psi$ (with $\psi = 100$ as the default value) because
the correct selection of $\alpha_{i}$-s is more crucial for
the partitioning than the $\beta_{k}$-s.
We now describe how we automatically adapt the partitioning
in our method.

\medskip

\textbf{Algorithm 3: Adaptation}

\begin{my_algorithm}
	\item Solve the linear program \eqref{eq:adaptation_linear_program} to find
	the $\alpha_i$-s and $\beta_k$-s
	\item Recompute the partitions $R_{C}\left(t\right), R_{D}\left(t\right)$ and
	$S_{C}\left(t\right), S_{D}\left(t\right)$ \\according to \eqref{eq:multiscale_partitioning_implementation}
	\item Recompute copy-number bounds \eqref{eq:copy_number_bounds}
	\item Optional: Perform averaging (Algorithm 4, see next section)
\end{my_algorithm}

\subsection*{Averaging of fast subnetworks}

After the selection of the scaling parameters in the Adaptation algorithm (Algorithm 3)
we try to apply the quasi-stationary assumption on fast subnetworks, if possible.
The subnetworks that are suitable for this purpose can be precomputed at the start of the simulation procedure.
This can be done for example, by selecting subnetworks
based on conservation relationships between species, pseudo-linearity or by
checking if a subnetwork is weakly reversible with deficiency zero (see Appendix \ref{subsec:averaging_strategies}).
\trred{Also, as the number of possible subnetworks grows rapidly with the number of reactions $n_{R}$ in the network,
one has to apply some heuristic to limit the search space of subnetworks. A straightforward choice would be to only
consider subnetworks with a maximum number of reactions.}
While implementing the averaging procedure we go through the list of suitable subnetworks
that only contain discrete reactions and we compute the timescale-separation
$\Delta\zeta\left(R\right)$ as in \eqref{eq:averaging-timescale-separation} for all such subnetworks.
We select the subnetworks with $\Delta\zeta\left( R\right) \geq \Theta$ and adopt
a greedy strategy of repeatedly selecting the largest subnetwork that has not been
selected yet. This gives us a list of disjoint subnetworks on which the
quasi-stationary assumption can be applied at the current time.
Then we compute the first and second moments of the stationary distribution \footnote{
For subnetworks that are weakly reversible and have deficiency zero,
we compute the moments of the product-form stationary distribution,
where all the species satisfy a conservation relation,
by approximating it with a multinomial distribution with appropriately defined virtual species.
The error incurred by this approximation is generally small
enough to be neglected.
If no species satisfy a conservation relation, then the stationary distribution
is just the product of independent Poisson distributions.
}
for each selected subnetwork and accordingly update the rate constants of the reactions that connect the subnetwork with the surrounding network.
For a description of the implementation details see Appendix \ref{appendix:averaging}.

\section{Combination with $\tau$-leaping\label{sec:COMBINATION_TAU_LEAP}}

In this section we propose a possible extension of this work by combining $\tau$-leaping schemes with
our method to improve its speed or accuracy.
Instead of only using two regimes in our simulation, i.e.\ discrete stochastic jump dynamics and
continuous deterministic dynamics, we can add another regime in between where the
dynamics is approximated by $\tau$-leaping.
To determine which reactions are approximated by $\tau$-leaping and which ones are approximated by
continuous deterministic dynamics, the partitioning in \eqref{eq:multiscale_partitioning_implementation} can be modified
appropriately to introduce a third set of reactions $R_{\tau}$ that are approximated with $\tau$-leaping.
The step size $\tau$ of the $\tau$-leap can be computed as described in \cite{cao2006efficient}:
\[
\tau = \min_{i: I} \left\lbrace
	\frac{\max\left\lbrace \epsilon x_{i}/g_{i}, 1 \right\rbrace}{|\mu_i\left(x\right)|},
	\frac{\max\left\lbrace \epsilon x_{i}/g_{i}, 1 \right\rbrace^{2}}{|\sigma^2_i\left(x\right)|}
\right\rbrace,
\]
where
\[
	I = \left\lbrace{ i: \xi_{ik} \neq 0 \text{ with } k \in R_{\tau} \cup R_{C}}  \right\rbrace
\]
is the set of species modified by the reactions in $R_{\tau}$ or $R_{C}$.
The functions $\mu_{i}\left(x\right)$ and $\sigma_{i}^{2}\left(x\right)$ are given by
\begin{align*}
\mu_i\left(x\right) = & \sum_{k \in R_{\tau} \cup R_{C}} \xi_{ik} \lambda_{k}\left(x\right) \text{ and } \\
\sigma^2_i\left(x\right) = & \sum_{k \in R_{\tau} \cup R_{C}} \xi^2_{ik} \lambda_{k}\left(x\right) \quad , \\
\end{align*}
where the parameters $\epsilon$ and $g$ are usually taken to be $\epsilon=0.03$ and $g_{i} = 3$.
It is important to also include the continuous reactions into the step size selection as the
leap condition, i.e.\ the change of reaction propensities during the leap has to be small, might not be
satisfied otherwise.
Because of our partitioning scheme there is no need to consider reactions involving low-copy
number species in a special manner as is done in \cite{cao2006efficient} by introducing the so-called
``critical reactions''.
The integration of the continuous dynamics of the system can then be performed with a
fixed time step integration scheme where the step size is equal to $\tau$.

\section{Numerical Examples\label{sec:NUMERICAL-EXAMPLES}}

In this section we consider three examples from Systems Biology
and demonstrate that our adaptive hybrid scheme
accurately captures the solution of a CME, and it can outperform the standard SSA
and fixed PDMP schemes.
In one of the examples we also compare our scheme to an existing adaptive PDMP method by Alfonsi et al.\ \cite{alfonsi2005}.

The first example (\textit{Fast Dimerization}) illustrates the usefulness of the averaging procedure.
In this example partitioning of reactions and species is not required
and our scheme reduces to the slow-scale SSA
given in Cao et al.\cite{Cao2005}.
However, the advantage of our method is that
the quasi-stationary assumption is applied automatically
without any need for explicit computations of the stationary distributions.

The second example (\textit{Toggle Switch}) shows the performance improvements of PDMP schemes over SSA.
In this example there exists a partitioning of the reactions and the species that is valid for the whole simulation period.
Our method automatically chooses an appropriate partitioning, without the need for preceding manual analysis of the network,
unlike other existing PDMP schemes. We also show results of the method proposed by Alfonsi et al.\cite{alfonsi2005}.
\trred{Note that we implemented this method in an adaptive fashion analogous to our approach and used the same ODE solver
(i.e.\ the Dormand-Prince 5 stepper from the boost odeint library) for the continuous dynamics.}

Finally, the third example (\textit{Repressilator}) demonstrates the advantages
of using an Adaptive PDMP scheme for networks that exhibit large variations in the
copy-numbers of their species.
In this example a fixed partitioning will not work and hence
our adaptive method will outperform any fixed PDMP approach
in terms of computational efficiency and accuracy.

All simulations were run in parallel on a grid using $128$ cores
and using the parameter values for our algorithm given in Table \ref{tab:parameters}.
A comparison of the runtimes for each example is given in Figure \ref{fig:runtime}.

\begin{table}[h]
\begin{center}
\begin{tabular}{|c|c|c|c|c|}
\hline 
$N_{0}$ & $\delta$ & $\mu$ & $\eta$ & $\Theta$ \tabularnewline
\hline
$1000$ & $1.0$ & $1.0$ & $0.9$ & $0.5$ \tabularnewline
\hline 
\end{tabular}
\end{center}
\caption{Parameters values used for all examples.}
\label{tab:parameters}
\end{table}

\begin{figure}[h]
\begin{centering}
\includegraphics[width=0.8 \columnwidth]{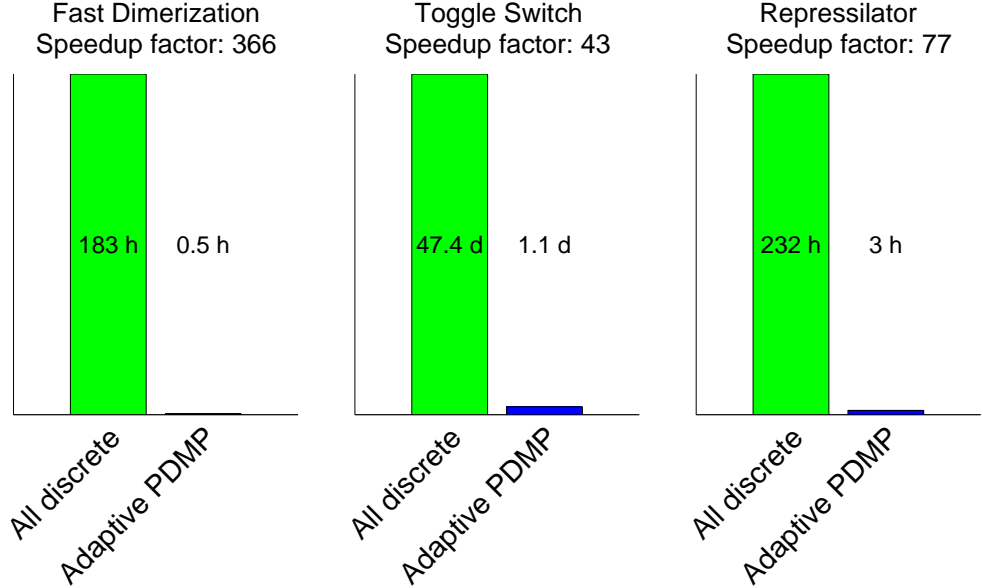} \\
\end{centering}
\caption{Runtime comparison of fully discrete dynamics and our Adaptive PDMP method.
Shown is the comparison of total CPU time required for 100'000 sample paths.}
\label{fig:runtime}
\end{figure}

\subsection{Fast Dimerization}
\label{subsec:fast_dimerization}

The fast dimerization SRN \cite{Cao2005} consists of $n_{S}=3$
species $s_{0}$, $s_{1}$ and $s_{2}$ and $n_{R}=4$ reactions and is depicted in Figure \ref{fig:fastdimerization_schematic}.
The reactions are listed in Table \ref{tab:fastdimerization}.

\begin{table}[h]
\begin{center}
\begin{tabular}{|c|c||c|c|}
\hline 
Reaction & $\kappa'$ & Reaction & $\kappa'$\tabularnewline
\hline 
\hline 
$s_{0} + s_{0} \rightarrow s_{1}$ & $1.0$ & $s_{1} \rightarrow s_{0} + s_{0}$ & $200.0$ \tabularnewline
\hline 
$s_{0} \rightarrow \emptyset$ & $0.02$ & $s_{1} \rightarrow s_{2}$ & $0.004$ \tabularnewline
\hline 
\end{tabular}
\end{center}
\caption{Reactions of the Fast Dimerization network}
\label{tab:fastdimerization}
\end{table}

\begin{figure}[h]
\begin{centering}
\includegraphics[width=0.3 \columnwidth]{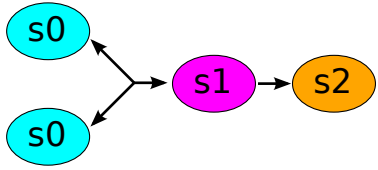} \\
\end{centering}
\caption{Cartoon of the Fast Dimerization network}
\label{fig:fastdimerization_schematic}
\end{figure}

The simulation is run from $t_{0}=0$ until $t_{f}=400$
and the initial state is set to $x\left(t_{0}\right)=x_{0} = \left(540, 730, 0\right)$.
Considering the difference in the orders of magnitude of the rate constants
for reaction $s_0 + s_0 \rightarrow s_1$ and $s_1 \rightarrow s_0 + s_0$ compared
to the rate constants for reaction $s_0 \rightarrow \emptyset$ and $s_1 \rightarrow s_2$,
it seems plausible that the dynamics of a subnetwork consisting of
the species $s_{0}$ and $s_{1}$ can be averaged.
Indeed our adaptive
scheme identifies the subnetwork consisting of species $s_{0}$ and
$s_{1}$ as suitable for averaging, and using the strategy for zero-deficiency networks
the quasi-stationary assumption can be applied.
This can be clearly seen in Figure \ref{fig:fastdimerization_single_mean}, where
the copy-numbers of species $s_{0}$ and $s_{1}$ show strong fluctuations for the SSA
simulations and only weak fluctuations for the Adaptive PDMP simulations (the weak fluctuations
are due to the reaction $s_{1} \rightarrow s_{2}$ which is still simulated with discrete dynamics).
The averaging provides a huge performance
gain without losing any essential information about the dynamics of the species $s_{2}$.

The simulation of $100'000$ sample paths with SSA took a total CPU
time of $\approx 183\text{ hours}$.
The simulation of $100'000$ sample paths with our Adaptive PDMP scheme
using zero-deficiency averaging took a total CPU time of $\approx 0.5\text{ hours}$.
Figure \ref{fig:fastdimerization_single_mean} shows the results of these
simulations.
The distribution given by the CME (estimated with SSA)
closely matches the distribution estimated with our Adaptive PDMP scheme
at time $t=200$ (see Figure \ref{fig:fastdimerization_dist}).
The Kolmogorov-Smirnov distance between the two distributions is $d_{KS}=0.01$.

\begin{figure}[h]
\begin{centering}
\includegraphics[width=\columnwidth]{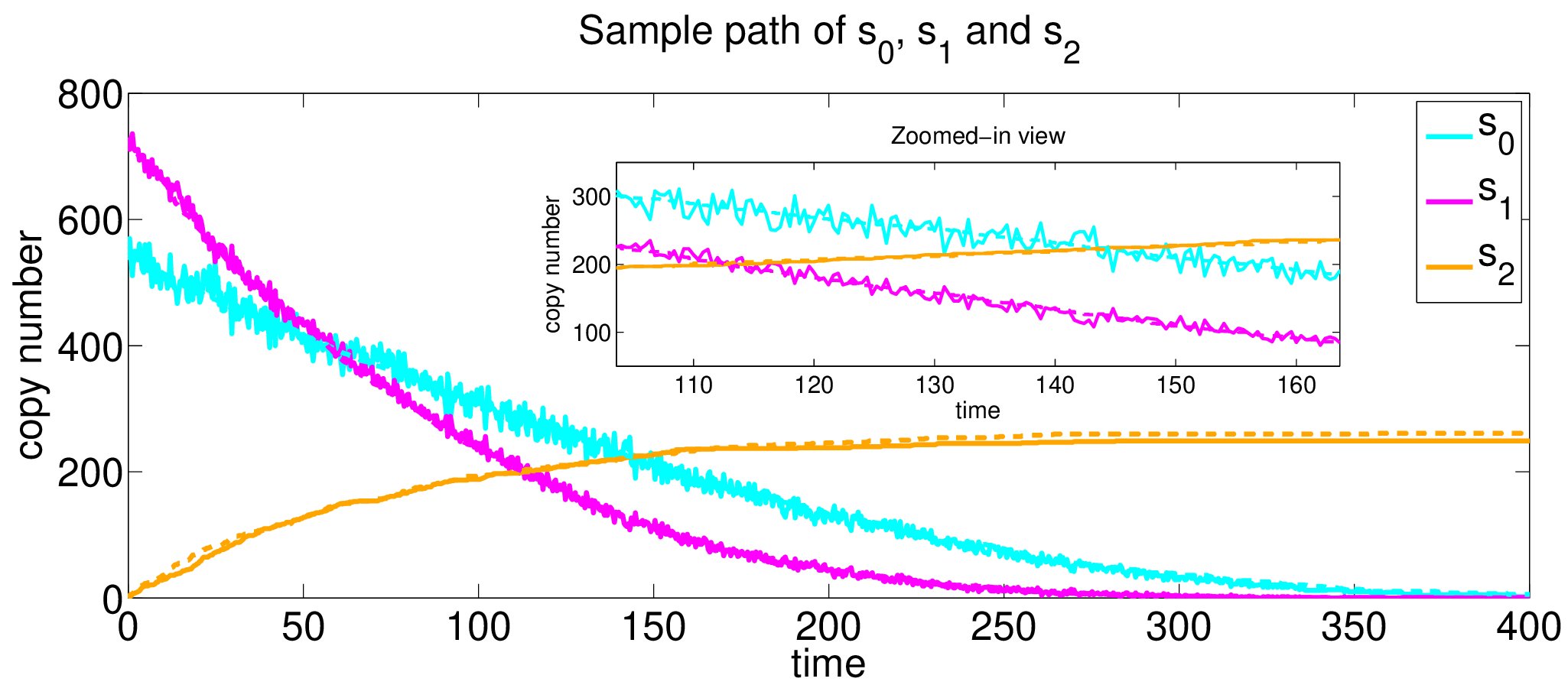} \\
\includegraphics[width=\columnwidth]{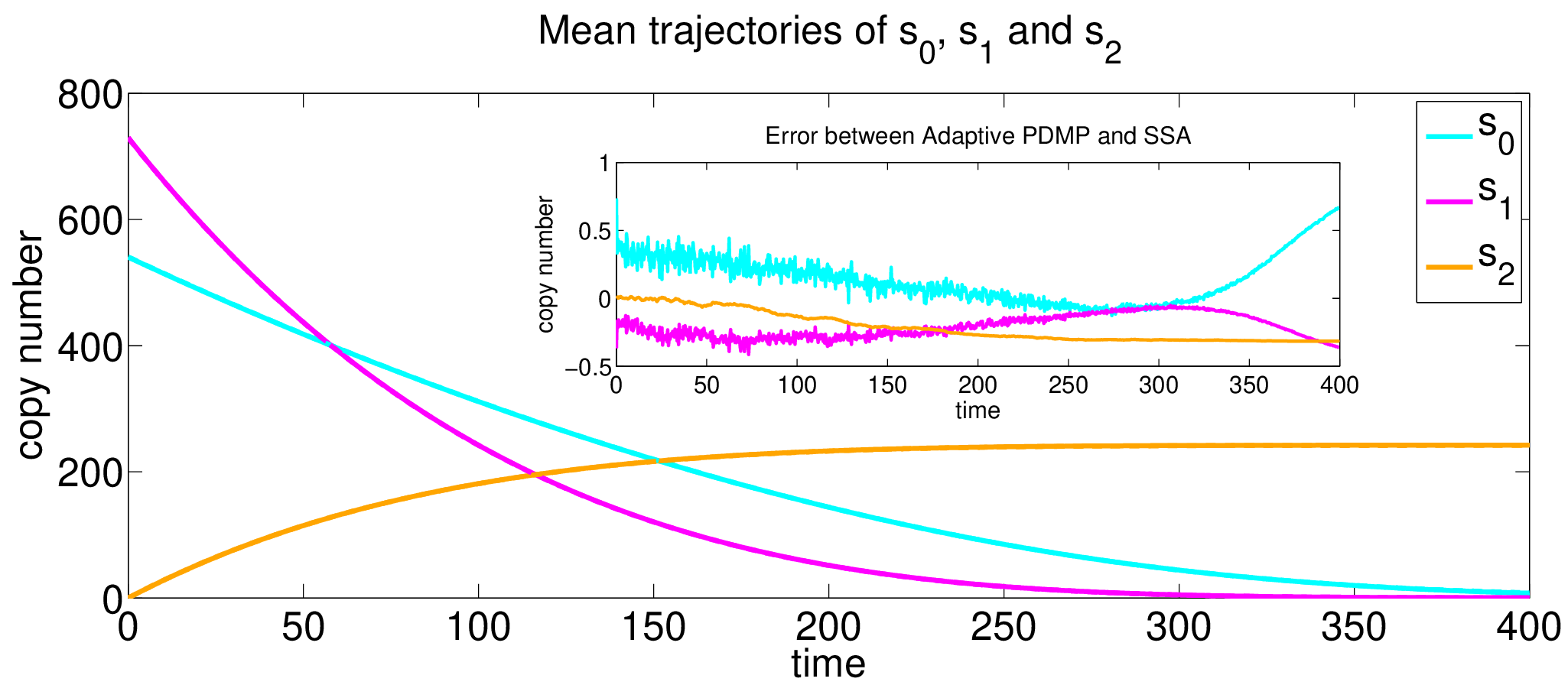} \\
\end{centering}
\caption{Simulation results of the Fast Dimerization SRN}
{\small The upper plot shows a sample path of species $s_0$, $s_1$ and $s_2$ with
SSA (solid lines) and with our Adaptive PDMP scheme (dashed lines).}
{\small The lower plot shows the mean copy-number of species $s_0$, $s_1$ and $s_2$
of $100'000$ simulations with SSA and our Adaptive PDMP scheme.}
\label{fig:fastdimerization_single_mean}
\end{figure}

\begin{figure}[h!]
\begin{centering}
\includegraphics[width=\columnwidth]{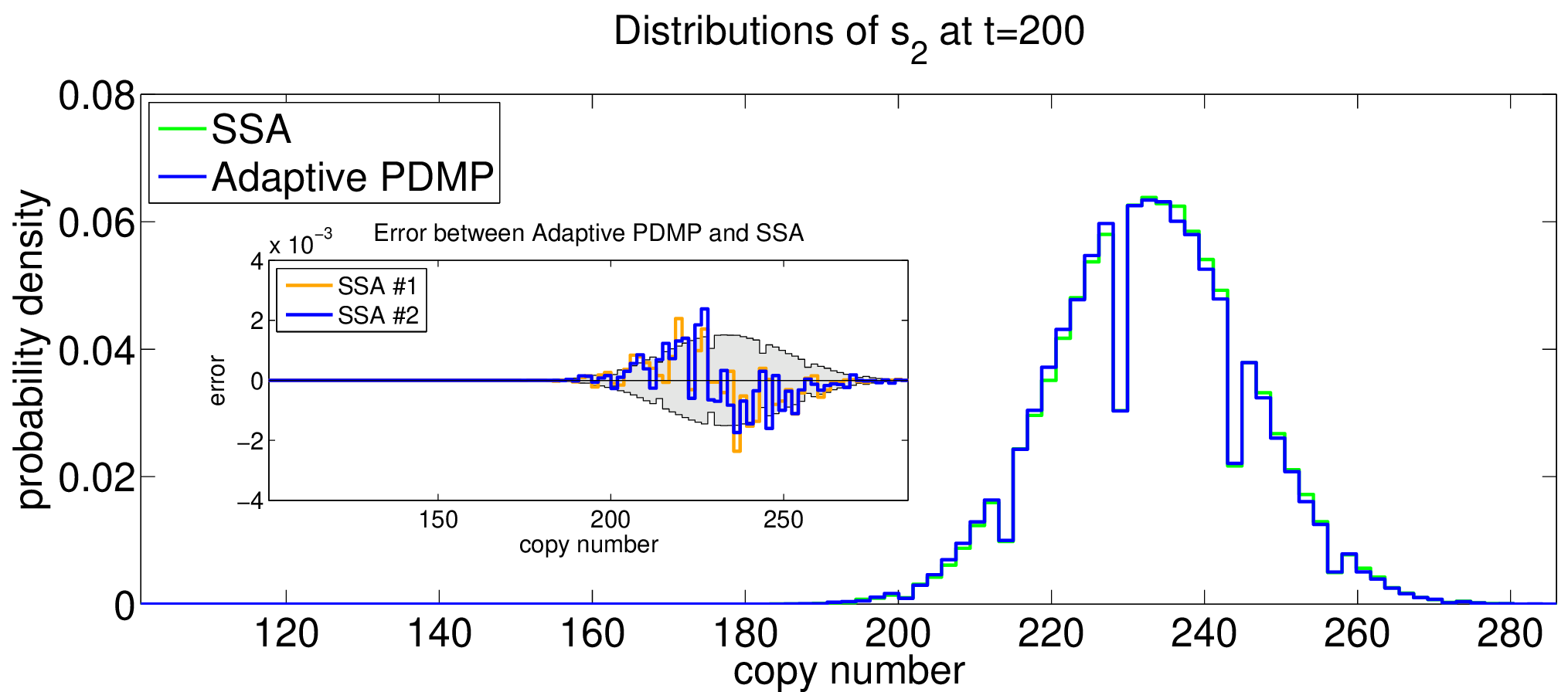} \\
\includegraphics[width=\columnwidth]{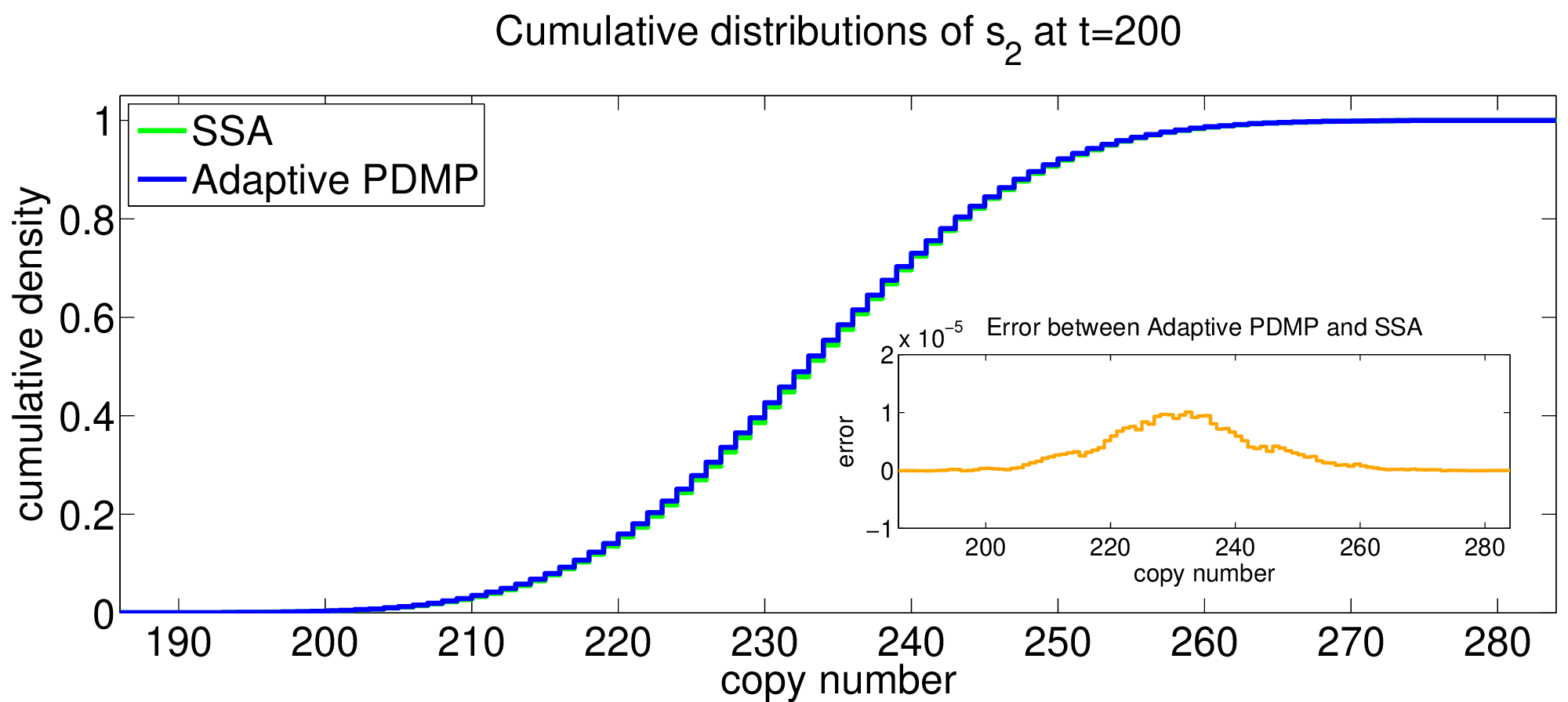} \\
\end{centering}
\caption{Probability distribution of the Fast Dimerization SRN}
{\small The upper plot shows the copy-number distribution (100 equally spaced bins)
of species $s_{2}$ at time $t=200$ with SSA (green) and with our Adaptive PDMP scheme (blue).
The inset shows the error compared to two independent runs with SSA,
each with $100'000$ samples. The grey shaded area
marks the 95\% confidence interval of the SSA run \#1.}{\small \par}
{\small The lower plot shows the cumulative copy-number distributions of species $s_{2}$ at time $t=200$
with SSA (green) and with our Adaptive PDMP scheme (blue).}
\label{fig:fastdimerization_dist}
\end{figure}

\subsubsection{Comparison to an existing adaptive PDMP method}

We compare our Adaptive PDMP scheme with an existing adaptive PDMP method proposed by Alfonsi et al.\ \cite{alfonsi2005}.
We implemented the method similarly to our own implementation and added the necessary details.
We compare both methods with SSA for the Fast Dimerization SRN without averaging.

Figure \ref{fig:fastdimerization_dist_comparison} shows the results of $10'000$ sample paths for each method. One can clearly see a good match of our method with
SSA. The results from the method by Alfonsi et al.\ are inaccurate. In this case, we suspect that this is due to the partitioning criteria that only considers the copy-number
of reactants. Because of this the reaction $s_{1} \rightarrow s_{2}$ is considered as being continuous when the copy-number of $s_{1}$ is big enough. Thus
the stochastic variation in species $s_{2}$ is strongly reduced in comparison to the exact SSA simulations.

\begin{figure}[h!]
\begin{centering}
\includegraphics[width=\columnwidth]{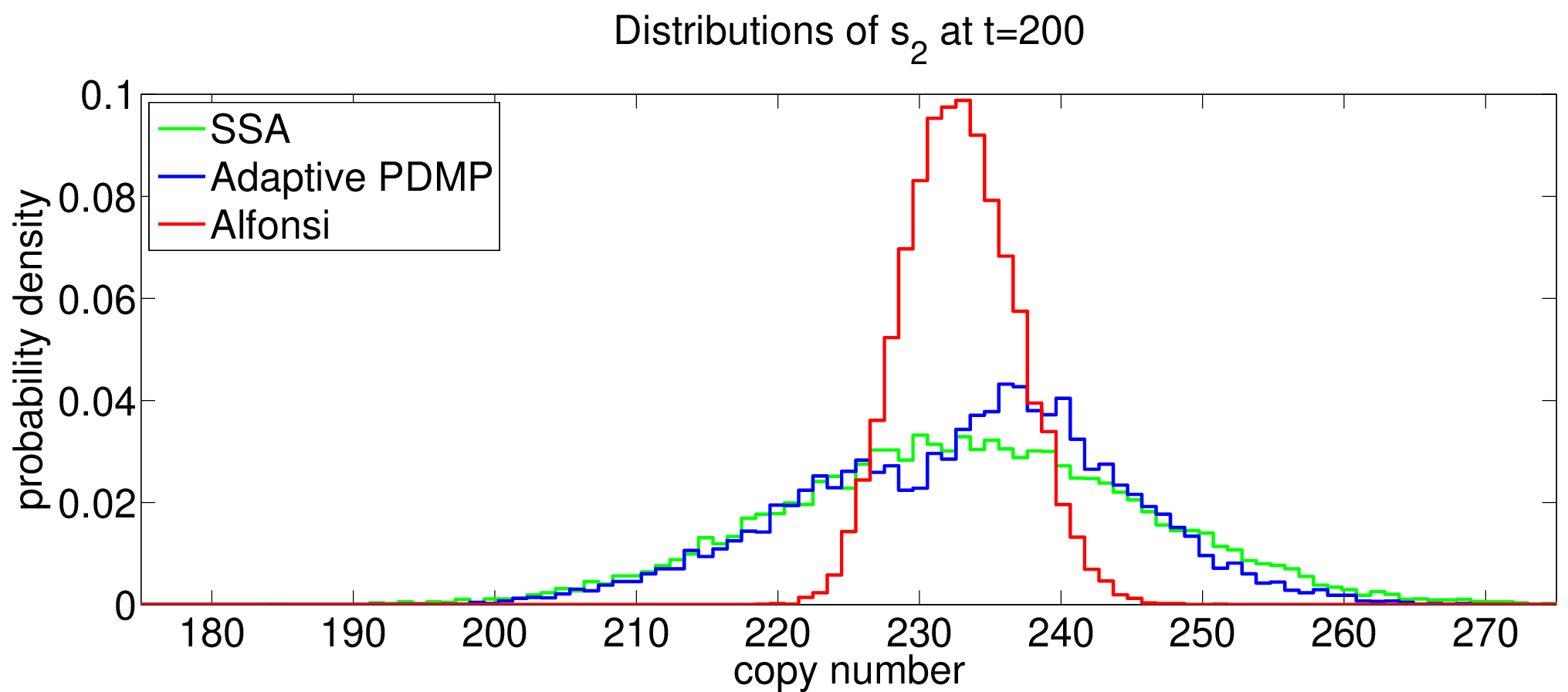} \\
\includegraphics[width=\columnwidth]{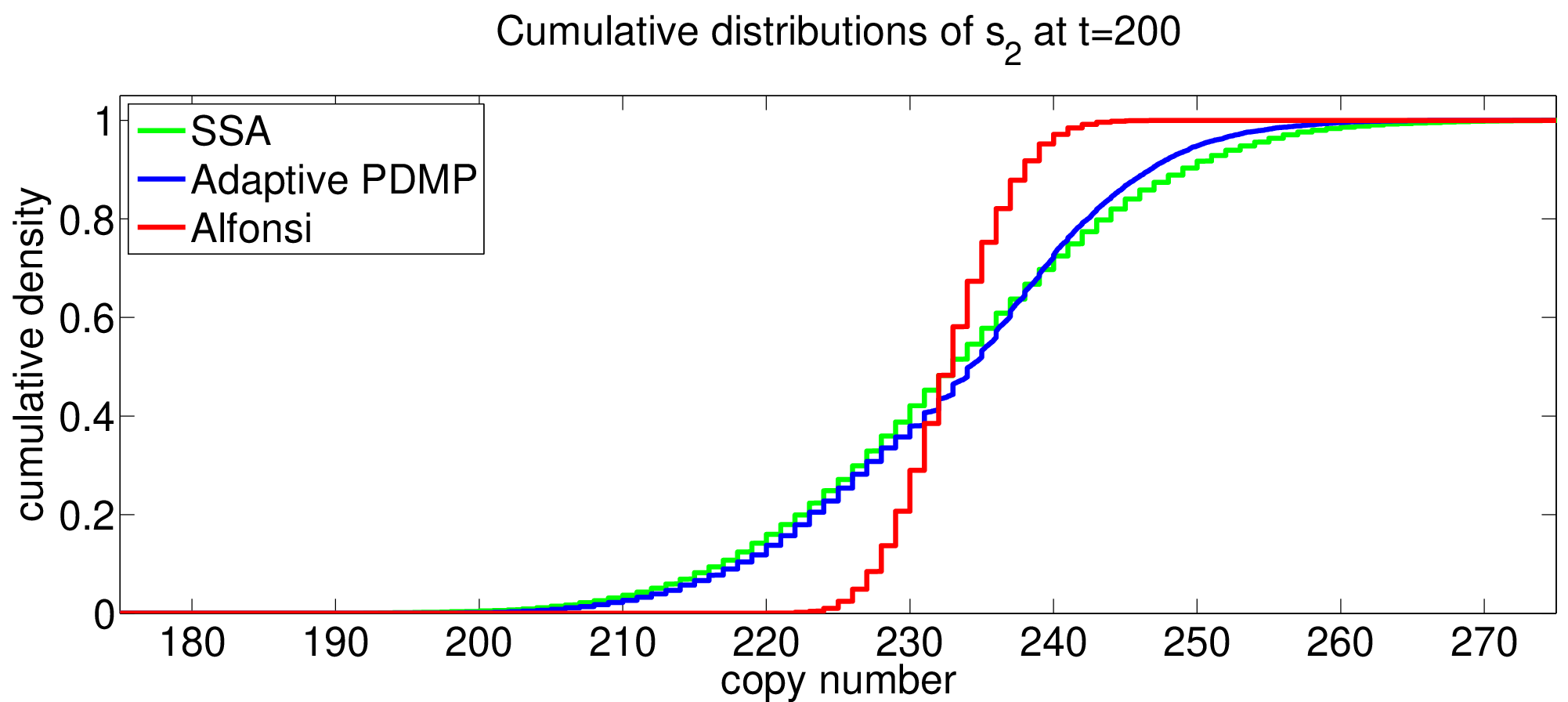} \\
\end{centering}
\caption{Comparison of the probability distribution of the Fast Dimerization SRN}
{\small
The upper plot shows the copy-number distribution (100 equally spaced bins)
of species $s_{2}$ at time $t=200$ with SSA (green), with our Adaptive PDMP (blue) and with the Alfonsi scheme (red).
}{\small \par}
{\small The lower plot shows the cumulative copy-number distributions of species $s_{2}$ at time $t=200$
with SSA (green), with our Adaptive PDMP (blue) and with the Alfonsi scheme (red).
}
\label{fig:fastdimerization_dist_comparison}
\end{figure}

\subsection{Toggle Switch}

We consider the Toggle-Switch network, which is similar to
the synthetic Toggle-Switch by Gardner et al.\cite{Gardner2000},
but implemented with mass-action kinetics.
It consists of $n_{S}=6$ species $m_{A}$, $s_{A}$, $m_{B}$, $s_{B}$,
$p_{A}$, $p_{B}$ and $n_{R}=16$ reactions and is depicted in Figure \ref{fig:toggleswitch_schematic}.
The reactions are listed in Table \ref{tab:toggleswitch}.

\begin{table}[h]
\begin{center}
\begin{tabular}{|c|c||c|c||cc}
\hline
Reaction & $\kappa'$ & Reaction & $\kappa'$ & \multicolumn{1}{|c|}{Reaction} & \multicolumn{1}{|c|}{$\kappa'$} \tabularnewline
\hline
\hline
$\emptyset\ \rightarrow\ m_{A}$ & $1.0$ & $m_{A}\rightarrow s_{A}$ & $5.0$ & \multicolumn{1}{|c|}{$s_{A}\rightarrow\emptyset$} & \multicolumn{1}{|c|}{$0.01$} \tabularnewline
\hline
$\emptyset\rightarrow m_{B}$ & $1.0$ & $m_{B}\rightarrow s_{B}$ & $5.0$ & \multicolumn{1}{|c|}{$s_{B}\rightarrow\emptyset$} & \multicolumn{1}{|c|}{$0.01$} \tabularnewline
\hline
$m_{A}\rightarrow\emptyset$ & $0.1$ & $s_{A}+m_{B}\rightarrow s_{A}$ & $20.0$ & \multicolumn{1}{|c|}{$s_{A}\rightarrow s_{A}+p_{A}$} & \multicolumn{1}{|c|}{$10.0$} \tabularnewline
\hline
$m_{B}\rightarrow\emptyset$ & $0.1$ & $s_{B}+m_{A}\rightarrow s_{B}$ & $20.0$ & \multicolumn{1}{|c|}{$s_{B}\rightarrow s_{B}+p_{B}$} & \multicolumn{1}{|c|}{$10.0$} \tabularnewline
\hline
$p_{A}\rightarrow\emptyset$ & $0.1$ & $p_{B}\rightarrow\emptyset$ & $0.1$ & &\tabularnewline
\cline{1-4}
\end{tabular}
\end{center}
\caption{Reactions of the Toggle Switch network}
\label{tab:toggleswitch}
\end{table}

\begin{figure}[h!]
\begin{centering}
\includegraphics[width=0.3 \columnwidth]{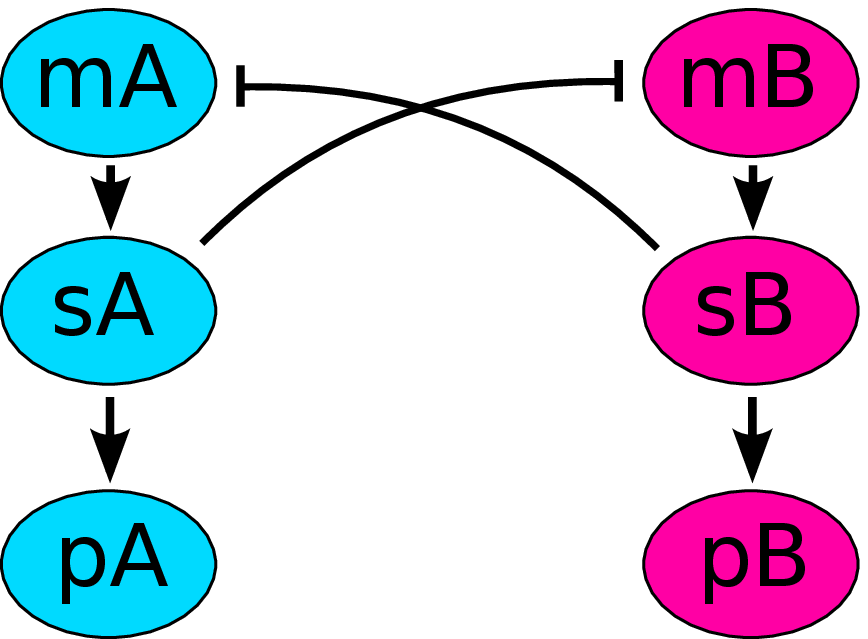} \\
\end{centering}
\caption{Cartoon of the Toggleswitch network}
\label{fig:toggleswitch_schematic}
\end{figure}

We estimate the solution given by the CME at time $t_{f} = 10^5$
where the initial state is set to $x_{0}=\left(0,\ldots,0\right)$.
\trred{As our scheme could not find any suitable subnetworks for this example,
we turned off the averaging procedure to reduce the computational overhead.}
There are two high copy-number proteins, $p_{A}$ and $p_{B}$, which
are controlled by two low copy-number mRNAs, $s_{A}$ and $s_{B}$.
These mRNAs $s_{A}$ and $s_{B}$ are the processed variants of the precursor mRNAs $m_{A}$ and $m_{B}$.
In addition to inducing translation the mRNAs, $s_{A}$ and $s_{B}$ also induce the degradation of $m_{A}$ and $m_{B}$ respectively.
The network shows a bistable behaviour and can stochastically
switch between a high copy-number $p_{A}$, low copy-number $p_{B}$ state
and a low copy-number $p_{A}$, high copy-number $p_{B}$ state. While the
SSA will spend a major part of the simulation time for simulating
the translation and degradation of the $p_{A}$ and $p_{B}$ proteins, our adaptive
PDMP scheme performs those reactions with continuous dynamics when the proteins
have a high copy-number, yielding a considerable performance gain.

The simulation of $100'000$ sample paths with SSA took a total CPU
time of $\approx47.4\text{ days}$.
The simulation of $100'000$ sample paths with our Adaptive PDMP scheme
using a step size of $dt = 5.0$ took a total CPU time of $\approx1.1\text{ days}$.
Figure \ref{fig:toggleswitch} shows the results of these simulations.
The distribution given by the CME (estimated with SSA)
closely matches the distribution estimated with our Adaptive PDMP scheme
at time $t=25'000$.
The Kolmogorov-Smirnov distance between the two distributions is $d_{KS}=0.01$.

\begin{figure}[h!]
\begin{centering}
\includegraphics[width=\columnwidth]{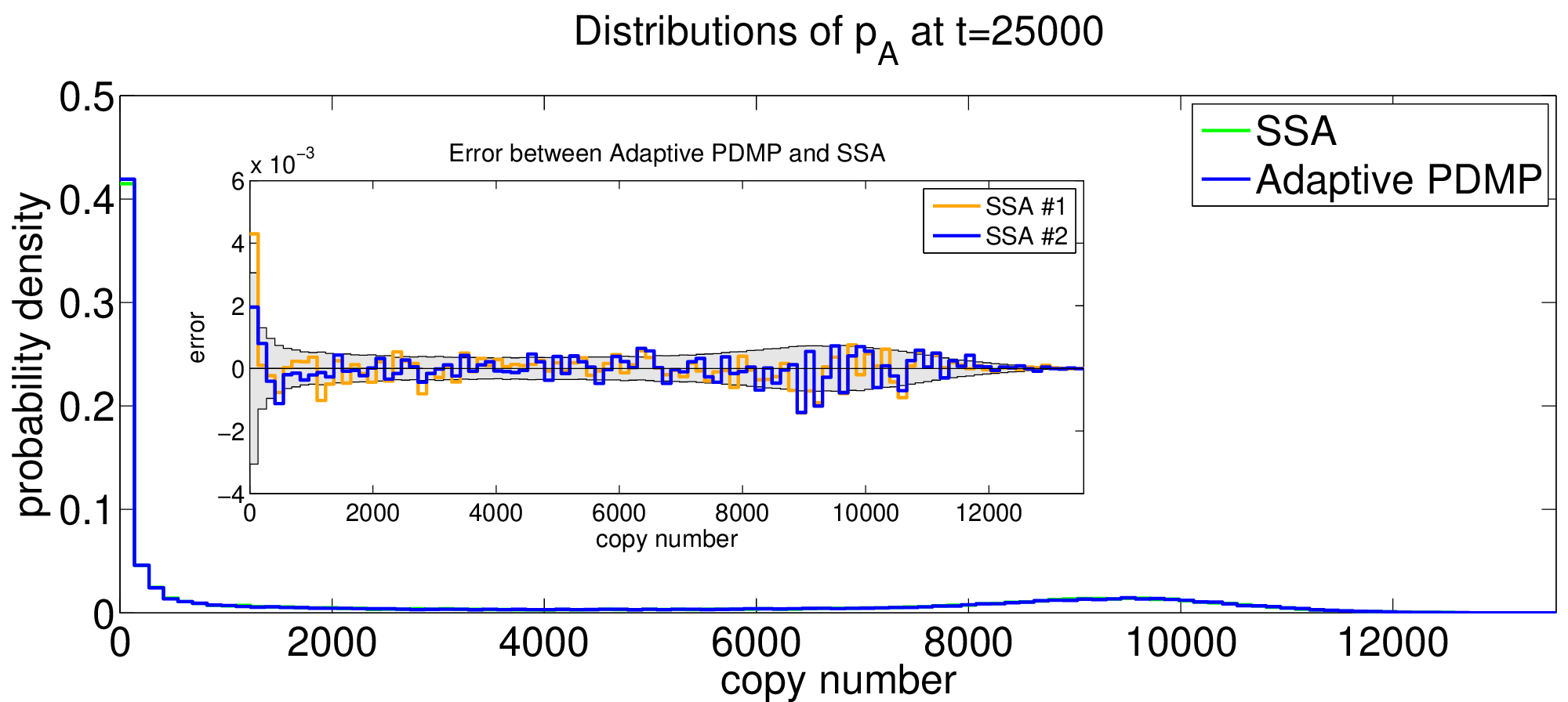} \\
\includegraphics[width=\columnwidth]{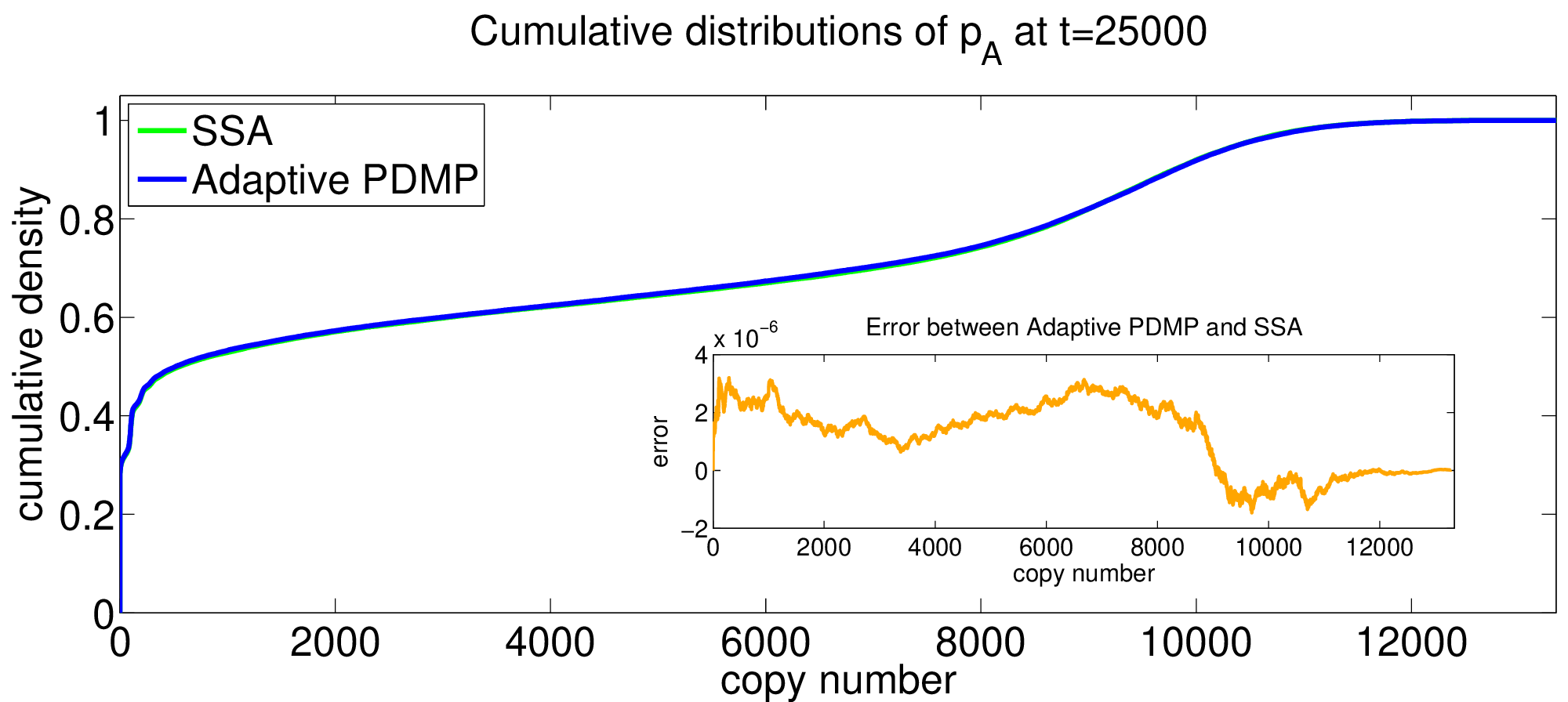}
\end{centering}
\caption{Probability distribution of the Toggleswitch SRN}
{\small The upper plot shows the copy-number distribution (100 equally spaced bins)
of species $p_{A}$ at time $t=25'000$ with SSA (green) and with our Adaptive PDMP scheme (blue).
The inset shows the error compared to two independent runs with SSA,
each with 100'000 samples. The grey shaded area
marks the 95\% confidence interval of the SSA run \#1.}
{\small The lower plot shows the cumulative copy-number distributions of species $p_{A}$ at time $t=25'000$
with SSA (green) and with our Adaptive PDMP scheme (blue).}
\label{fig:toggleswitch}
\end{figure}

\subsection{Repressilator}

We consider the Repressilator network, which is similar to
the synthetic Repressilator by Elowitz et al.\cite{elowitz2000synthetic},
but implemented with mass-action kinetics.
It consists of $n_{S}=6$ species $m_{A}$, $p_{A}$, $m_{B}$, $p_{B}$,
$m_{C}$, $p_{C}$ and $n_{R}=15$ reactions and is depicted in Figure \ref{fig:repressilator_schematic}.
The reactions are listed in Table \ref{tab:repressilator}.

\medskip{}

\begin{table}[h]
\begin{center}
\begin{tabular}{|c|c||c|c||c|c|}
\hline 
Reaction & $\kappa'$ & Reaction & $\kappa'$ & Reaction & $\kappa'$\tabularnewline
\hline 
\hline 
$\emptyset\rightarrow m_{A}$ & $0.1$ & $\emptyset\rightarrow m_{B}$ & $0.1$ & $\emptyset\rightarrow m_{C}$ & $0.1$\tabularnewline
\hline 
$m_{A}\rightarrow m_{A}+p_{A}$ & $50.0$ & $m_{B}\rightarrow m_{B}+p_{B}$ & $50.0$ & $m_{C}\rightarrow m_{C}+p_{C}$ & $50.0$\tabularnewline
\hline 
$m_{A}\rightarrow\emptyset$ & $0.01$ & $m_{B}\rightarrow\emptyset$ & $0.01$ & $m_{C}\rightarrow\emptyset$ & $0.01$\tabularnewline
\hline 
$m_{A}+p_{B}\rightarrow p_{B}$ & $50.0$ & $m_{B}+p_{C}\rightarrow p_{C}$ & $50.0$ & $m_{C}+p_{A}\rightarrow p_{A}$ & $50.0$\tabularnewline
\hline 
$p_{A}\rightarrow\emptyset$ & $0.01$ & $p_{B}\rightarrow\emptyset$ & $0.01$ & $p_{C}\rightarrow\emptyset$ & $0.01$\tabularnewline
\hline 
\end{tabular}
\end{center}
\caption{Reactions of the Repressilator network}
\label{tab:repressilator}
\end{table}

\begin{figure}[h!]
\begin{centering}
\includegraphics[width=0.3 \columnwidth]{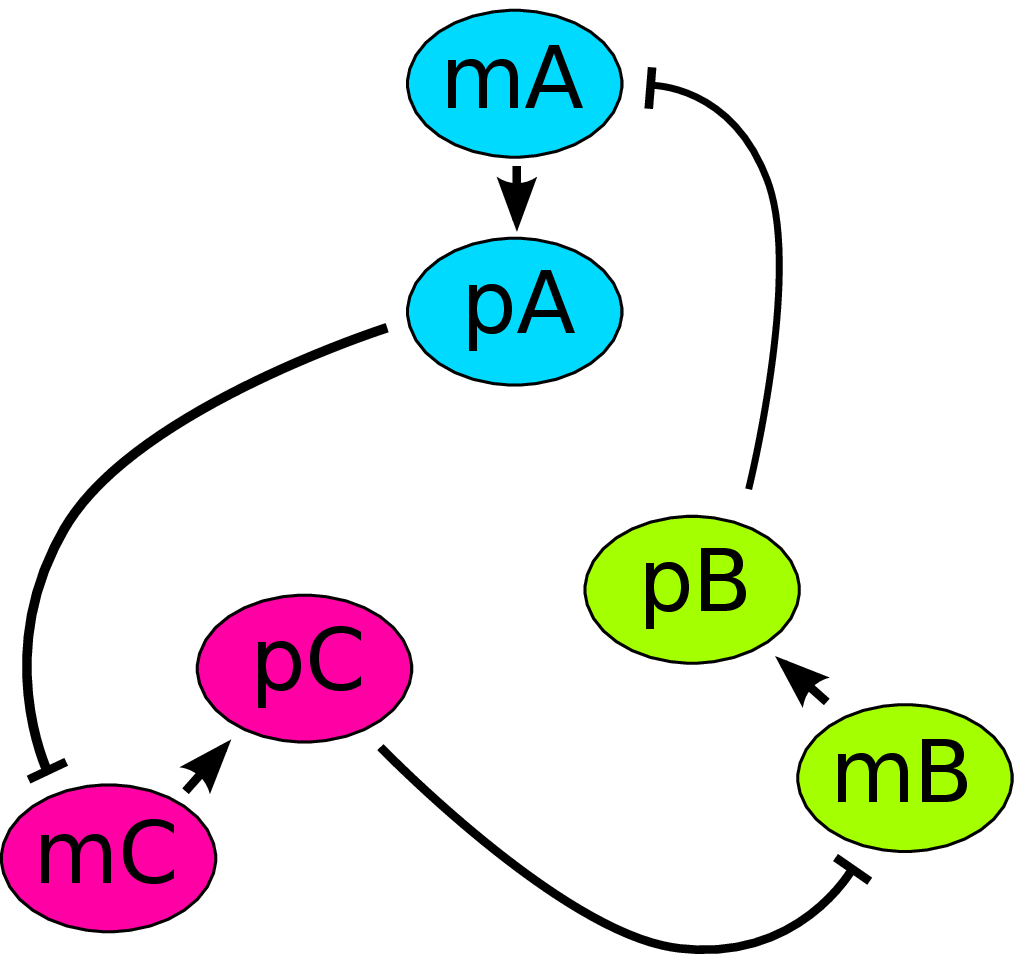} \\
\end{centering}
\caption{Cartoon of the Repressilator network}
\label{fig:repressilator_schematic}
\end{figure}

We estimate the solution given by the CME at time $t_{f} = 5 \times 10^4$
where the initial state is set to $x_{0}=\left(10, 500, 0, 0, 0, 0\right)$.
\trred{Also in this example, there were no suitable subnetworks for the averaging procedure,
so we reduced the computational overhead by turning it off.}

The Repressilator consists of three mRNAs $m_{A}$, $m_{B}$ and $m_{C}$
and the three corresponding proteins $p_{A}$, $p_{B}$ and $p_{C}$.
Each protein catalyses the degradation of another mRNA in a circular
manner, that is, $p_{A}$ degrades $m_{C}$, $p_{B}$ degrades $m_{A}$
and $p_{C}$ degrades $m_{B}$. In the stochastic setting, this network exhibits spontaneous
oscillatory behaviour (see Figure \ref{fig:repressilator_adaptation}).

\begin{figure}[h!]
\begin{centering}
\includegraphics[width=\columnwidth]{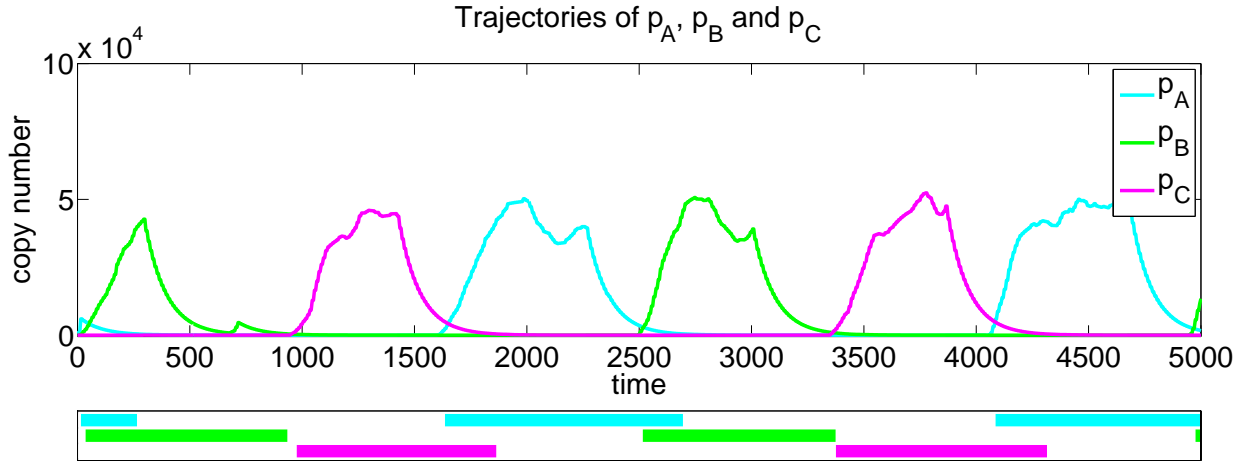} \\
\end{centering}
\caption{Single trajectory showing the adaptation of our method}
{\small Shown is a single sample-path of the copy-number for species $p_{A}$, $p_{B}$ and $p_{C}$ simulated with
our Adaptive PDMP scheme. The colored areas below indicate when the corresponding species $p_{X}$ and the
reactions $m_{X} \rightarrow m_{X} + p_{X}$ and $p_{X} \rightarrow \emptyset$ are considered to be continuous.}{\small \par}
\label{fig:repressilator_adaptation}
\end{figure}

\trred{
A fixed PDMP scheme that simulates reactions affecting only the protein species continuously
(i.e. $p_{X} \rightarrow \emptyset$ and $m_{X} \rightarrow m_{X} + p_{X}$) and the rest discretely,
does not exhibit the oscillatory behaviour that can be seen by simulating the SRN with SSA.
}
This can be seen in Figure \ref{fig:repressilator_mean} where the first moment of the copy-number of
species $p_{A}$ is shown for different simulation schemes. The result of a fixed PDMP scheme that
simulates the translation- and degradation-reaction of species $p_{A}$ with continuous
dynamics is shown in red and shows a quick drop of the copy-number to $0$.
This is due to the discrete nature of the species in the stochastic
description. In the fixed PDMP scheme the protein $p_{A}$ with
a very small concentration can still degrade the mRNA $m_{C}$,
whereas in the stochastic setting the copy-number of protein $p_{A}$ can
drop to $0$ and will not contribute to
the degradation of $m_{C}$ anymore, thus enabling $m_{C}$ to rise.
On the other hand, when the copy-numbers of the proteins are high,
the stochasticity of the translation- and the degradation-reaction is negligible.

Our Adaptive PDMP scheme can approximate the protein dynamics continuously
when they have high copy-numbers and switch to discrete stochastic
simulations when they have low copy-numbers.
This is depicted in Figure \ref{fig:repressilator_adaptation} where the colored bars
indicate when the corresponding degradation- and translation-reactions are approximated
with continuous dynamics. One can easily see that the dynamics for the
degradation- and translation-reaction of $p_{B}$ is approximated with continuous
dynamics when the copy-number of $p_{B}$ is high.
In Figure \ref{fig:repressilator_mean} of simulations with SSA (green) and with our adaptive method (blue) show the same oscillatory behaviour.
\trred{
Due to stochasticity the average amplitude decreases with time as the individual trajectories
lose their synchronization, unlike the deterministic setting. Note that the initial state is the same for each trajectory.
}
Hence this example illustrates, how an adaptive partitioning scheme can be very useful.

The simulation of $100'000$ sample paths with SSA took a total CPU
time of $\approx232\text{ hours}$.
The simulation of $100'000$ sample paths with our Adaptive PDMP scheme
using a step size of $dt = 1.0$ took a total CPU time of $\approx3\text{ hours}$.
Figure \ref{fig:repressilator} shows the distribution of species $p_{A}$ at time $t=4'750$ for simulations with SSA (green)
and with our adaptive scheme (blue). In addition the insets show the deviation
from two independent runs with SSA and also mark the 95\% confidence intervals (grey area).
The distribution given by the CME (estimated with SSA)
closely matches the distribution estimated with our Adaptive PDMP scheme
at time $t=4'750$.
The Kolmogorov-Smirnov distance between the two distributions is $d_{KS}=0.003$.

\begin{figure}[h!]
\begin{centering}
\includegraphics[width=\columnwidth]{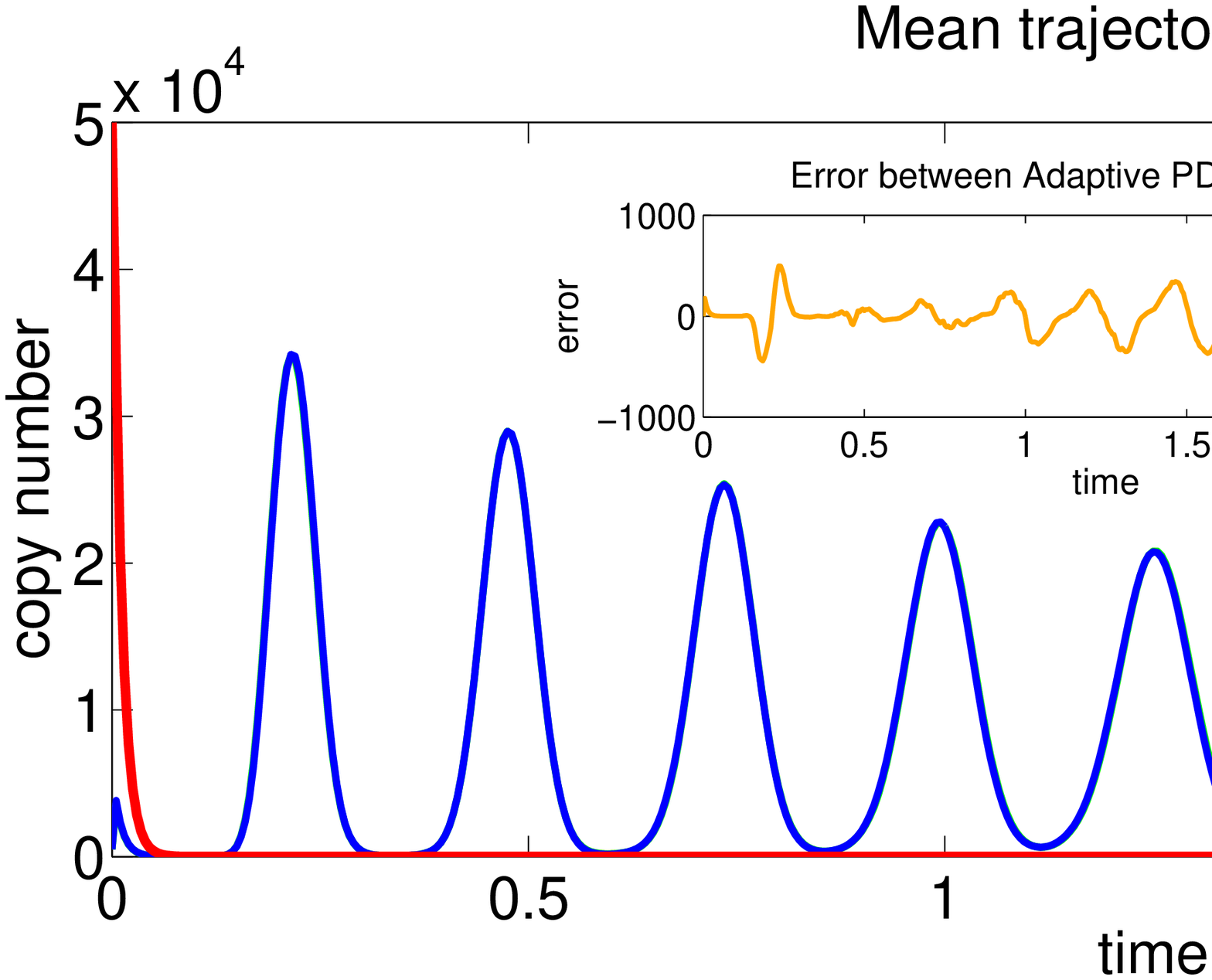} \\
\end{centering}
\caption{Mean trajectories of species $p_{A}$ for the Repressilator SRN}
{\small Shown is the mean of species $p_{A}$ of
$100'000$ simulations with SSA (green), with the fixed
PDMP scheme (red) and with our Adaptive PDMP scheme (blue).}
\label{fig:repressilator_mean}
\end{figure}

\begin{figure}[h!]
\begin{centering}
\includegraphics[width=\columnwidth]{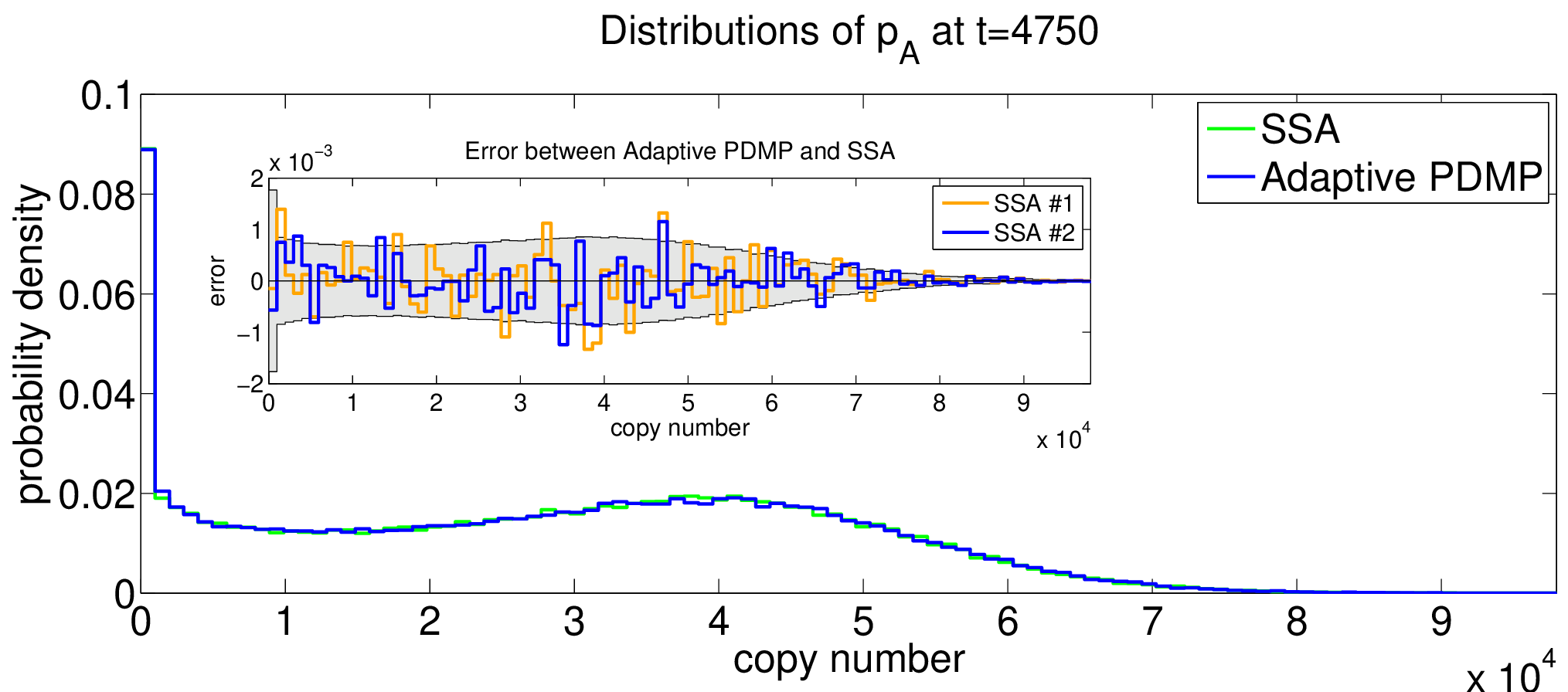} \\
\includegraphics[width=\columnwidth]{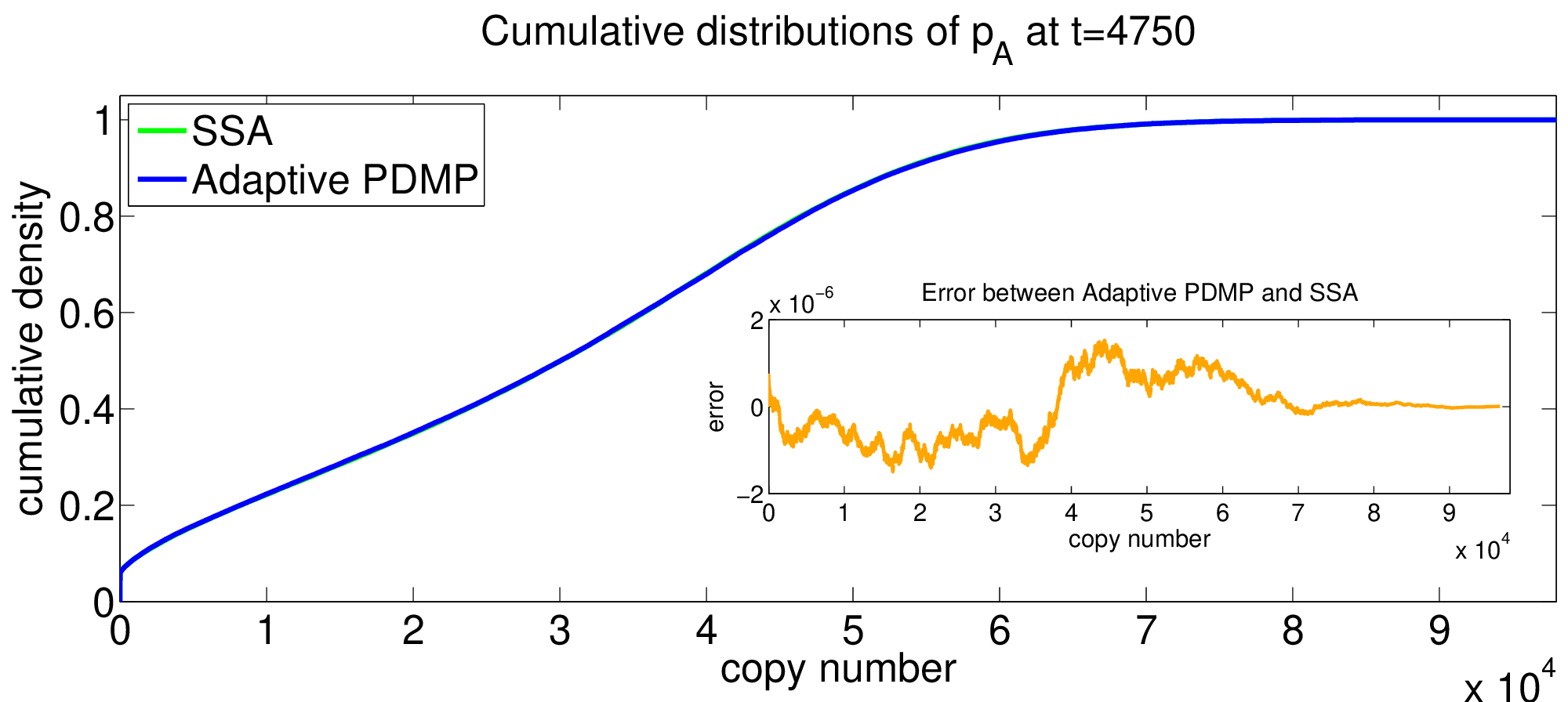} \\
\end{centering}
\caption{Simulation results of the Repressilator SRN}
{\small The upper plot shows the copy-number distribution (100 equally spaced bins)
of species $p_{A}$ at time
$t=4'750$ with SSA (green) and with our Adaptive PDMP scheme (blue).
The inset shows the error compared to two independent runs with SSA,
each with 100'000 samples. The grey shaded area
marks the 95\% confidence interval of the SSA run \#1.}{\small \par}
{\small The lower plot shows the cumulative copy-number distributions of species $p_{A}$ at time $t=4'750$
with SSA (green) and with our Adaptive PDMP scheme (blue).}
\label{fig:repressilator}
\end{figure}

\section{Conclusions and future work\label{sec:CONCLUSIONS}}

In this paper we propose a novel adaptive hybrid scheme for approximating the
solution of a Chemical Master Equation for Stochastic Reaction Networks spanning a wide range of reaction timescales,
and having large variation in species copy-number scales.
The method is based on a rigorous mathematical framework by Kang et al.\cite{kang2013separation}, that ensures the validity of the approximations.

To achieve considerable speed-ups over SSA our method introduces two sources of error:
we treat high copy-number species as continuous
and we assume stationarity for the dynamics of fast subnetworks.
However with the help of parameters we can tune the amount of error one is willing to tolerate in a qualitative manner.
For example, setting $N_0 \approx \infty$ will reduce our scheme
to SSA and setting $\Theta \approx \infty$ would prevent the application
of the quasi-stationary assumption.

We compare the results of our adaptive method with SSA and a fixed PDMP scheme using
examples from Systems Biology.
Our results show that the adaptive method can provide considerable performance enhancements in comparison to both these approaches.
Moreover, as our last example demonstrates, the adaptivity can be crucial to achieve accuracy and
reduce the computational cost at the same time.

The adaptive feature of our method makes it particularly well suited for SRNs
where the timescales of the reaction dynamics can change with time.
Such behaviour can be expected in regulatory gene expression networks.
We hope that our method will provide a tool for researchers to analyse more complicated and realistic models in Systems Biology.

Our adaptive scheme can potentially be used in the simulation
of large scale compartment models of Spatial SRNs \cite{elf2004spontaneous}.
For such systems the spatial distribution of species can change with time and
our adaptive method could reduce the computational costs significantly.
For example, in the Min System in E. coli \cite{fange2006noise},
the location of high copy-number species is constantly changing. So an adaptive approach could neglect
the fluctuations at these locations, while conserving the fluctuations
at locations with low copy-numbers.
In the future we aim to develop such adaptive schemes for Spatial SRNs.

Another direction for further research is a combination of our method with a $\tau$-leaping scheme, as we propose in Section \ref{sec:COMBINATION_TAU_LEAP},
and the corresponding study of the mathematical validity of such an approach.

\appendix

\section{Averaging of fast subnetworks}
\label{appendix:averaging}

\trred{We now describe the averaging procedure for fast subnetworks in this section which is
an optional extension to the previously discussed approximation of SRNs.}

To check if a subnetwork has \textit{fast} dynamics in comparison to the surrounding network
(which consists of species that are directly influenced by the subnetwork)
we look at two sets of reactions, the reactions within the subnetwork, and the
reactions connecting the subnetwork to the surrounding network.
We then define a timescale for both these sets of reactions
and describe a formal criterion to check whether the subnetwork is fast.

As an example, Figure \ref{fig:averaging_schematic} shows a network where a transcription factor switches
between an active (tf*) and an inactive (tf) form. In the active form it can bind to a gene and
initiate transcription. Assuming that the switch between the active and inactive form of the transcription factor is fast in comparison to
the binding of the active transcription factor to the gene, the red reaction would define a fast subnetwork. The orange species make up
the fast subnetwork and the blue species make up the surrounding network.
The green reaction connects the subnetwork to the surrounding network.

\begin{figure}[h]
\begin{centering}
\includegraphics[width=0.5 \columnwidth]{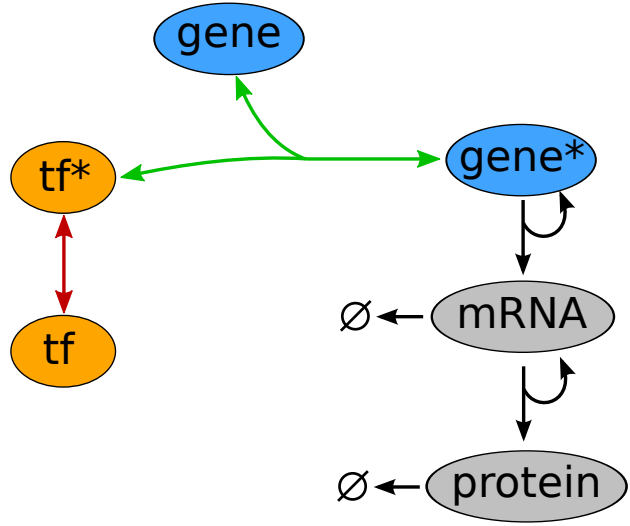} \\
\end{centering}
\caption{Example network depicting a fast subnetwork and the corresponding reactions}
\label{fig:averaging_schematic}
\end{figure}

Now we describe this procedure elaborately.
\trred{Consider a subnetwork with $m_{R}$ reactions $R \subseteq \left\{1,\ldots,n_{R}\right\}$.} We define
the set of species involved in the reactions in $R$ as
\[
Q\left(R\right) = \left\{ i: \xi_{ik} \neq 0 \text{ for some } k \in R \right\} 
\]
and define the set of species that are catalytically involved in the reactions in $R$ as
\trred{
\[
Q_{C}\left(R\right) = \left\{ i \not\in Q\left(R\right): \nu_{ik} = \nu'_{ik} > 0 \text{ for some } k \in R \right\}.
\]
}
Note that the propensities of the reactions in $R$ only depend on the species in $Q\left(R\right)$
and $Q_{C}\left(R\right)$.
Assume that the copy-numbers of the species in $Q_{C}\left(R\right)$ are constant. Then the reactions
in $R$ and the species in $Q\left(R\right)$ naturally define a SRN.
Suppose that the dynamics of this SRN converges to a unique stationary
distribution for any choice of copy-numbers of the species in $Q_{C}\left(R\right)$.
If the reactions in $R$ are sufficiently faster than
the surrounding reactions, given by
\begin{align*}
R_{S} \left(R\right) = & \left\{ k\not\in R:\nu_{ik} > 0 \text{ or } \nu'_{ik} > 0 \text{ for some }i\in Q\left(R\right)\right\} \\
\cup & \left\{ k\not\in R:\xi_{ik} \neq 0 \text{ for some }i\in Q_{C}\left(R\right)\right\},
\end{align*}
then the stationary distribution of the subnetwork
can be used to compute the propensities of the reactions in $R_{S}\left(R\right)$.
This is called the quasi-stationary assumption \cite{Cao2005}.

We now specify a formal criteria to check if the
subnetwork dynamics is sufficiently fast in comparison
to its surrounding network.
\trred{
For this we define a timescale for each reaction $k$ as
\[
\zeta\left(k\right) =
	\frac{\log \kappa'_{k}}{\log N_{0}}
	+ \sum_{i = 1}^{n_{S}}
		\nu_{ik}
		\cdot \frac{\log x_{i} }{\log N_{0}}
		\cdot \mathds{1}_{ \left\{  x_{i} > 0 \right\} }
\\
\]
and the observation timescale as
$\zeta\left(0\right) = \frac{\log\left(1 / t_f\right)}{\log{N_{0}}}$.
The form of $\zeta\left(k\right)$ is chosen in such a way that $N_{0}^{\zeta\left(k\right)}$ captures the timescale of the $k$-th reaction channel.
However, when reaction channel $k$ consumes a species $i$ (i.e.\ $\xi_{ik} < 0$) and the copy-number $x_{i} = 0$, then the exponent to $N_{0}$ would have to be
$- \infty$ to exactly capture the timescale of the $k$-th reaction channel. A single reaction producing species $i$ would then change the timescale to a finite value
and immediately render the averaging decision based on the timescale as invalid.
To prevent this we overestimate the timescale of a reaction channel by ignoring the $-\infty$ terms when computing $\zeta\left(k\right)$.
This is achieved by adding the indicator functions $\mathds{1}_{ \left\{  x_{i} > 0 \right\} }$ in $\zeta\left(k\right)$.
}


The timescale separation of the fast subnetwork is then given by
\begin{align}
\label{eq:averaging-timescale-separation}
\Delta\zeta\left(R\right) = \min_{k\in R}\left(\zeta\left(k\right)\right) - \max_{k\in R_{S}\left(R\right) \cup \left\{0\right\}}\left(\zeta\left(k\right)\right) \quad .
\end{align}
In words, $\Delta\zeta\left(R\right)$ is the difference between the slowest timescale of all reactions within the subnetwork
and the fastest timescale of all reactions connecting the subnetwork to the surrounding network.
We fix a positive parameter $\Theta$ (with $\Theta = 0.5$ as the default value).
If $\Delta \zeta \left( R \right) \geq \Theta$
then we call the subnetwork defined by $R$ and $Q\left(R\right)$ to be \textit{fast}.

Given that a subnetwork is fast, we can apply the quasi-stationary assumption if the dynamics of this subnetwork
converges to a stationary distribution.
Below we describe three different strategies to check if this is indeed the case.

\label{subsec:averaging_strategies}
\subsection*{Finite Markov Chains}

Suppose that the species in $Q\left(R\right)$ satisfy a conservation relation of the form
$\sum_{i \in Q\left(R\right)} b_{i} x_{i} = C$,
where the $b_{i}$-s and $C$ are positive integers
and assume that the copy-numbers of species outside $Q\left(R\right)$ are fixed.
Due to the reactions in $R$ the copy-number vector of all the species is constrained
to reside in a finite set, which can be enumerated as
$\left\{e_1,\ldots,e_L\right\}$.
\trred{
Define a $L\times L$ matrix $A$ with entries given by
\[
A_{ij} = \begin{cases}
\sum_{R_{ij}} \lambda'_{k}\left( e_{i} \right) & \text{ if } i \neq j \\
-\sum_{k \in R} \lambda'_{k}\left( e_{i} \right) & \text{ if }i=j \\
\end{cases}
\]
where $R_{ij} = \left\{k \in R: e_{i} + \xi_{k} = e_{j}\right\}$.
}
If there exists a unique positive vector $\pi = \left(\pi_1,\ldots,\pi_L\right)$ such that
$\sum_{i=1}^{L} \pi_i = 1$ and $\pi A = 0$, then the Markov process corresponding to the
subnetwork dynamics is ergodic with the unique stationary distribution given by the vector $\pi$.
One can check the existence and uniqueness of $\pi$ by verifying that the dimension
of the left null-space of $A$ is one \cite{seneta2006}.

\subsection*{Pseudo-Linear subnetworks}

Consider the subnetwork defined by $R$ and $Q\left(R\right)$ as before.
Assume that the propensities of the reactions in $R$ are affine functions of the
copy-numbers of the species in $Q\left(R\right)$,
that is $\sum_{i\in Q\left(R\right)} \nu_{ik} \leq 1$ for all $k \in R$.
Such a subnetwork is called \textit{pseudo-linear} because it does not involve any bimolecular reactions
between the fast species in the subnetwork.
For linear networks the moment equations are closed. Hence we 
can easily find the first two stationary moments by
finding the fixed points of the moment equations
\begin{align*}
\frac{d\mathbb{E}[X_{i}]}{dt} = & \sum_{k\in R} \xi_{ik} \mathbb{E}[\lambda'_{k} \left(X\right)] \\
\frac{d\mathbb{E}[X_{i} X_{j}]}{dt} = & \sum_{k\in R} \xi_{ik} \mathbb{E}[\lambda'_{k} \left(X\right) X_{j}] + \sum_{k\in R} \mathbb{E}[\lambda'_{k} \left(X\right) X_{i}] \xi_{jk} 
	+ \sum_{k\in R} \xi_{ik} \mathbb{E}[\lambda'_{k} \left(X\right)] \xi_{jk},
\end{align*}
and checking if these fixed points are stable. As we only consider at most bimolecular reactions we only need the first two stationary moments to apply the quasi-stationary assumption.

\subsection*{Zero-Deficiency subnetworks}

In some situations, a fast subnetwork may not be pseudo-linear
and its state space is either infinite or too large to be able to compute the exact
stationary distribution for the corresponding Markov Chain (for example see the \textit{Fast Dimerization}
network in Section \ref{sec:NUMERICAL-EXAMPLES}).
In such cases it is sometimes possible to use the recent
results from Anderson et al.\cite{anderson_kurtz_2010}\ to compute
the stationary distribution of the subnetwork dynamics.
In particular Anderson et al.\cite{anderson_kurtz_2010}\ show
that \textit{weakly reversible} \footnote{
A SRN is called \textit{weakly reversible} if for every reaction $k$
there is a sequence of reactions $\nu_{k_{1}}\rightarrow\nu'_{k_{1}}\rightarrow\cdots\rightarrow\nu_{k_{m}}\rightarrow\nu'_{k_{m}}$
such that $\nu'_{k}=\nu_{k_{1}}$ and $\nu_{k}=\nu'_{k_{m}}$.
}
networks with zero \textit{deficiency} \footnote{
The \textit{deficiency} of a SRN is defined as $\delta=|C|-l-s$ where
$|C|$ is the number of reaction complexes ($C=\{\nu_{k}\}\cap\{\nu'_{k}\}$),
$l$ is the number of linkage classes (a linkage class is a connected
component of the reaction complex graph corresponding to the SRN)
and $s$ is the dimension of the stoichiometric subspace $S=span_{k\in\left\{ 1,\ldots,n_{R}\right\} }\left\{ \xi_{k} \right\} $.
We refer the readers to \cite{anderson_kurtz_2010} for more details.
}
admit a product-form stationary distribution given by
\trred{\begin{align}
\label{eq:product_form_stationary_distribution}
\pi\left(x\right) =
	M\prod_{i\in Q\left(R\right)}\frac{c_{i}^{x_{i}}}{x_{i}!}, \quad & x \in \Gamma
\end{align}
and $\pi\left(x\right) = 0$ otherwise,
where $\Gamma$ is an irreducible state space containing the initial state,
$c\in\mathbb{R}_{\geq0}^{n_{S}}$ is the equilibrium point of
the corresponding deterministic system and $M$ is a positive normalizing
constant.
We can find the appropriate set $\Gamma$ using the results in \cite{gupta2013determining},
and this would imply that the dynamics of the distribution of the SRN will converge to the stationary distribution $\pi$.
}

\subsection*{Implementation details}

During the simulation of the sample path we need to keep track of
changes to the copy-numbers of species in $Q\left(R\right)$,
so that we can recompute the moments of the stationary distribution of the fast subnetwork.
We do this by modifying the partitioning of the reactions, so that the reactions
in $R_{S}\left(R\right)$ are discrete.
Every time a reaction in $k\in R_{S}\left(R\right)$ occurs, the
first and second moments of the stationary distribution are recomputed
and the appropriate rate constants are updated.
For this, Algorithm 2 has to be modified in a straightforward manner.
Below we describe the averaging procedure.

\medskip

\textbf{Algorithm 4: Averaging}

\begin{my_algorithm}
	\item \textbf{Once:} Precompute subnetworks $L_A$ suitable for averaging \\(e.g. weakly reversible, zero-deficiency subnetworks)
	\item Let $L_P$ be the set of previously averaged subnetworks
	\item Set $L_C = \emptyset$
	\item \textbf{for each} suitable subnetwork $R \in L_A$ \textbf{do}
		\item \ \textbf{if} $\Delta\zeta\left(R\right) \geq \Theta$ \textbf{then}
			\item \ \ Set $L_C = L_C \cup \left\{R\right\}$
		\item \ \textbf{end if}
	\item \textbf{end for}
	\item $L_F = \emptyset$
	\item \textbf{while} $L_C \neq \emptyset$ \textbf{do}
		\item \ Set $R = \argmax_{W\in L_C} \left(|W|\right)$
		\item \ Set $L_C = L_C \setminus {R}$
		\item \ \textbf{if} $\underset{W\in L_{F}}{\cup} W \cap R = \emptyset$ and \\
			$\underset{W\in L_{F}}{\cup} Q\left(W\right) \cap Q\left(R\right) = \emptyset$ \textbf{then}
			\item \ \ Set $L_F = L_F \cup \left\{R\right\}$
			\item \ \ Modify $R_{D}, R_{C}$, so that $R_{S}\left(R\right) \subseteq R_{D}$
		\item \ \textbf{end if}
	\item \textbf{end while}
	\item \textbf{for each} subnetwork $R \in L_P \setminus L_F$ \textbf{do}
		\item \ Sample state of the species in $Q\left(R\right)$ from the \\stationary distribution
	\item \textbf{end for}
	\item \textbf{for each} subnetwork $R\in L_F$ \textbf{do}
		\item \ Compute stationary first and second moment of the \\species in $Q\left(R\right)$
		\item \ Update the stoichiometries and propensities of \\reactions in $R_{S}\left(R\right)$
	\item \textbf{end for}
\end{my_algorithm}

\section{Mathematical Justification\label{appendix:MATHEMATICAL_JUSTIFICATION}}

In this section we provide a mathematical justification of
why our adaptive scheme produces a close approximation to the solution
of a CME.
We start with a simple lemma that shows convergence of a sequence of
processes, in which each process is formed by stitching together
two Markov processes at a random stopping time determined by the first process.
In this section let $n = n_{S}$, where $n_{S}$ is the number of species,
and let ``$\Rightarrow$'' and ``$\approx_{d}$'' denote convergence in distribution
and similarity in distribution respectively.
Let $\mathbb{D}_{\mathbb{R}^{n}} [0,\infty)$ denote
the space of c\`adl\`ag functions, from $[0,\infty)$ to $\mathbb{R}^{n}$,
that is, right continuous functions with
left limits \cite{ethier2009markov}.
Let $A$ be a measurable subset of $\mathbb{R}^{n}$ and define
$\mathcal{G}_{A}: \mathbb{D}_{\mathbb{R}^{n}} \left[0,\infty\right) \rightarrow [0,\infty]$ by
\begin{align}
\label{eq:stopping_time}
\mathcal{G}_{A}\left(\zeta\right) = \inf\left\{t \geq 0: \zeta\left(t\right) \not\in A\right\}.
\end{align}
For each $N$, let
$W_{1}^{N}, W_{2}^{N}$ be two Markov processes with sample paths in $\mathbb{D}_{\mathbb{R}^{n}} \left[0,\infty\right)$.
Assume that for any sequence of initial conditions satisfying $W_{i}^{N}\left(0\right) \Rightarrow W_{i}\left(0\right)$ we have
$W_{i}^{N} \Rightarrow W_{i}$ as $N \rightarrow \infty$, where $W_{i}$ is a PDMP as in \eqref{eq:random_time_PDMP}.
Define a stopping time $\tau^{N} = \mathcal{G}_{A} \left(W_{1}^{N}\right)$.
Suppose $W_{1}^{N}\left(0\right) = w_{0}$ and
$W_{2}^{N}\left(0\right) = f\left( W_{1}^{N}\left(\tau^{N}\right) \right)$,
where $f: \mathbb{R}^{n} \rightarrow \mathbb{R}^{n}$ is a measurable function.

Define another process $W^{N}$ by
\[
W^{N}\left(t\right) =
W_{1}^{N}\left(t\right) \mathds{1}_{\left\{t < \tau^{N}\right\}}
+ W_{2}^{N}\left(t - \tau^{N}\right) \mathds{1}_{\left\{t \geq \tau^{N}\right\}}
\]

\newtheorem*{lemma}{Lemma}
\begin{lemma}
For any $t \geq 0$ we have
\[
W^{N}\left(t\right) \Rightarrow W\left(t\right)
\]
where
\[
W\left(t\right) =
W_{1}\left(t\right) \mathds{1}_{\left\{t < \tau\right\}}
+ W_{2}\left(t - \tau\right) \mathds{1}_{\left\{t \geq \tau\right\}}
\]
with $\tau = \mathcal{G}_{A}\left(W_{1}\right)$, $W_{1}\left(0\right) = w_{0}$ and $W_{2}\left(0\right) = f\left(W_{1}\left(\tau\right)\right)$.
\end{lemma}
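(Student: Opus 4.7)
My strategy is to reduce the weak convergence $W^N(t)\Rightarrow W(t)$ to almost sure convergence on a common probability space via the Skorokhod representation theorem, establish the joint convergence of the stopping time $\tau^N$ and the pre-switch state $W_1^N(\tau^N)$, then invoke the hypothesized convergence of $W_2^N$ under convergent initial conditions, and finally patch the two pieces together at the fixed evaluation time $t$.

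First I would apply the Skorokhod representation theorem to $W_1^N \Rightarrow W_1$ (with deterministic initial condition $w_0$) to obtain a probability space on which $W_1^N \to W_1$ almost surely in $\mathbb{D}_{\mathbb{R}^n}[0,\infty)$ equipped with the Skorokhod $J_1$-topology. The key step is then showing that, along this coupling, $\tau^N \to \tau$ and $W_1^N(\tau^N) \to W_1(\tau)$ almost surely. The standard continuity lemma for first-exit times (see Ethier--Kurtz \cite{ethier2009markov}) asserts that $\mathcal{G}_A$ is continuous at a c\`adl\`ag path $\zeta$ provided $\zeta$ exits $A$ \emph{cleanly}, i.e., there exist $s_k \downarrow \mathcal{G}_A(\zeta)$ with $\zeta(s_k) \notin \bar A$, and provided $\zeta$ is continuous at $\mathcal{G}_A(\zeta)$. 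I would verify that $W_1$ almost surely satisfies these conditions under mild assumptions on $A$ (e.g.\ that its boundary is $W_1$-regular): between jumps, $W_1$ evolves along the deterministic vector field in \eqref{eq:random_time_PDMP}, which generically crosses $\partial A$ transversally, and the probability that a Poisson jump time of $W_1$ coincides with the exit time is zero.

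Given $W_1^N(\tau^N) \to W_1(\tau)$ a.s., and assuming $f$ is continuous at $W_1(\tau)$ almost surely (which holds for the rescaling maps used by the adaptive scheme, whose only discontinuities lie on a $W_1(\tau)$-null set), the initial value of $W_2^N$ converges: $f(W_1^N(\tau^N)) \to f(W_1(\tau))$ a.s. Treating $W_2^N$ via the strong Markov property as independent of the history of $W_1^N$ up to $\tau^N$ once its starting point is fixed, and applying the hypothesis $W_2^N \Rightarrow W_2$ under convergent initial conditions, a further application of Skorokhod representation on a suitable enlargement yields a common probability space on which $W_1^N \to W_1$, $\tau^N\to \tau$, and $W_2^N \to W_2$ all hold almost surely. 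At a fixed $t$, the PDMP structure implies $\mathbb{P}(\tau = t) = 0$ for Lebesgue-a.e. $t$, so either $t<\tau$ eventually and $W^N(t)=W_1^N(t)\to W_1(t)=W(t)$, or $t>\tau$ and eventually $t>\tau^N$, whence $W^N(t)=W_2^N(t-\tau^N)\to W_2(t-\tau)=W(t)$ by the joint convergence of $(\tau^N,W_2^N)$. A short continuity/density argument extends the conclusion to every $t \geq 0$.

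The main obstacle is the continuity of the exit-time functional $\mathcal{G}_A$ at the limit path $W_1$. On Skorokhod space, $\mathcal{G}_A$ is notoriously discontinuous: a PDMP sample path may exit $A$ by a jump that lands outside $A$, in which case an arbitrarily small perturbation $W_1^N$ can have the analogous jump land \emph{inside} $A$, producing $\tau^N$ far from $\tau$. Handling this cleanly requires either mild regularity assumptions on the pair $(A, W_1)$ ensuring that exits generically occur through continuous flow, or a separate argument showing that the set of sample paths exiting via a jump contributes negligibly in the limit. Once this is settled, the convergence of $W_2^N$ and the patching are comparatively routine.
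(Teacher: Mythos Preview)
Your proposal is correct and follows the same overall decomposition as the paper---split the stitched process at the switching time, handle $W_1^N$ before $\tau^N$ and $W_2^N$ after---but the technical execution differs. The paper works directly at the level of expectations: for a bounded continuous $g$ it writes $\E(g(W^N(t)))$ as the sum of a pre-switch and a post-switch term, handles the first via the joint weak convergence $(W_1^N,\tau^N)\Rightarrow(W_1,\tau)$, and handles the second by conditioning on $W_1^N(\tau^N)$ and invoking the compactness-of-stopping-times result of Baxter and Chacon \cite{baxter1977compactness} to obtain $W_1^N(\tau^N)\Rightarrow W_1(\tau)$ directly, without any explicit discussion of the continuity of $\mathcal{G}_A$ on Skorokhod space. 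Your Skorokhod-representation route is more hands-on and makes the regularity requirements on the exit-time functional explicit, which is arguably more honest about where the technical weight lies; the paper's argument is shorter but outsources precisely the step you flag as the main obstacle to an external citation. Both approaches ultimately need the same regularity of the limiting PDMP at its exit from $A$, just packaged differently.
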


\begin{proof}
Let $g : \R^n \to \R^n$ be any bounded continuous function. To prove the
lemma it suffices to show that for any $t \geq 0$
\begin{align}
\label{lem:condsuff}
\lim_{N \to \infty} \E ( g( W^N(t) ) ) = \E( g( W(t) )),
\end{align}
where $\E$ denotes the expectation operator. Fix a $t \geq 0$. We can write
\begin{align}
\label{exp_split}
\E \left( g( W^N(t) ) \right)
& = \E  \left( g( W^N(t) )  \mathds{1}_{ \{ t <
\tau^N\} } + g( W^N(t) )  \mathds{1}_{ \{ t \geq \tau^N\} }
\right) \\
& =   \E  \left( g( W^N_1(t) )  \mathds{1}_{ \{ t < \tau^N\} } + g( W^N_2(t - \tau^N) )  \mathds{1}_{ \{ t \geq \tau^N\} } \right). \notag
\end{align}
Note that $\tau^N = \mathcal{G}_A(W^N_1)$ and $W^N_1 \Rightarrow W_1$,
where $W_1$ is a PDMP. Hence, we can conclude that $(W^N_1,\tau^N)
\Rightarrow (W_1,\tau)$, where $\tau = \mathcal{G}_A (W_1)$. This shows
that
\begin{align}
\label{exp_split1}
\lim_{N \to \infty} \E  \left( g( W^N_1(t) )  \mathds{1}_{ \{ t < \tau^N\} }
\right) = \E  \left( g( W_1(t) )  \mathds{1}_{ \{ t < \tau\} } \right).
\end{align}
The second term in \eqref{exp_split} can be expressed as
\begin{align*}
\E  \left( g( W^N_2(t - \tau^N) )  \mathds{1}_{ \{ t \geq \tau^N\} } \right) =
\E \left( \E\left( g( W^N_2(t - \tau^N) )  \mathds{1}_{ \{ t \geq \tau^N\} }
\vert W^N_1(\tau^N) \right) \right)
\end{align*}
Given $W^N_1(\tau^N) =w$, the process $ W^N_2$ converges in distribution
to the PDMP $W_2$ with initial state $f(w)$. Moreover, the results in
\cite{baxter1977compactness} show that $W^N_1(\tau^N) \Rightarrow
W_1(\tau)$. Therefore
\begin{align*}
\lim_{N  \to \infty} \E  \left( g( W^N_2(t - \tau^N) )  \mathds{1}_{ \{ t \geq
\tau^N\} } \right) & =
\E \left( \E\left( g( W_2(t - \tau) )  \mathds{1}_{ \{ t
\geq \tau\} }  \vert W_1(\tau) \right) \right) \\ & =
\E\left( g( W_2(t - \tau)
)  \mathds{1}_{ \{ t \geq \tau\} } \right).
\end{align*}
Using this relation along with \eqref{exp_split1} and \eqref{exp_split} we
obtain
\begin{align*}
\lim_{N \to \infty} \E \left( g( W^N(t) ) \right) 
= \E  \left( g( W_1(t)
)  \mathds{1}_{ \{ t < \tau\} } \right) + \E\left( g( W_2(t - \tau) )  \mathds{1}_{ \{
t \geq \tau\} } \right) = \E( g( W(t) )).
\end{align*}
This shows \eqref{lem:condsuff} and finishes the proof of this lemma.
\end{proof}

The above lemma can be easily generalized to the more general case, where $W^{N}$ is defined
by stitching together $m$ Markov processes
$W_{1}^{N},\ldots,W_{m}^{N}$, where each $W_{i}^{N}$ converges to a PDMP $W_{i}$.
We define $\tau_{0}^{N} = 0$ and $\tau_{i}^{N} = \tau_{i-1}^{N} + \mathcal{G}_{A} \left( W_{i}^{N} \right)$,
and similarly $\tau_{0} = 0$ and $\tau_{i} = \tau_{i-1} + \mathcal{G}_{A} \left( W_{i} \right)$ for $i=1,\ldots,(m - 1)$.
The initial states of the processes $W_{i}^{N}$ and $W_{i}$ are given by
$W_{i}^{N}\left(0\right) = f_{i} \left( W_{i-1}^{N}\left(\tau_{i-1}^{N}\right)\right)$
and $W_{i}\left(0\right) = f_{i} \left( W_{i-1}\left(\tau_{i-1}\right)\right)$ respectively, for $i=2,\ldots,m$,
where $f_{i}: \mathbb{R}^{n} \rightarrow \mathbb{R}^{n}$ are measurable functions.
Assuming $W^{N}_{1}\left(0\right) \Rightarrow W_{1}\left(0\right)$, the process $W^{N}$ given by
\begin{align}
\label{eq:stitched_markov_processes}
W^{N}\left(t\right) =
\sum_{i=1}^{m} W_{i}^{N}\left(t - \tau_{i-1}^{N} \right) \mathds{1}_{\left\{\tau_{i-1}^{N} \leq t < \tau_{i}^{N}\right\}}
\end{align}
converges in distribution to the process $W$ defined by
\begin{align}
\label{eq:stitched_PDMP_processes}
W\left(t\right) =
\sum_{i=1}^{m} W_{i}\left(t - \tau_{i-1}\right) \mathds{1}_{\left\{\tau_{i-1} \leq t < \tau_{i}\right\}}.
\end{align}

In our adaptive scheme the times of adaptation are defined analogously to \eqref{eq:stopping_time} where the set
$A$ is given by \eqref{eq:copy_number_bounds}.
We denote the times of adaptation by
$\tau_{i}^{N}$ for $i=1,\ldots,(m-1)$, where $(m-1)$ is the number of adaptations.
We define $\tau_{0}^{N} = 0$ and $\tau_{m}^{N} = \infty$ for convenience.
The computation of the scaling parameters $\alpha = \left(\alpha_{1}, \ldots, \alpha_{n}\right)$ and $\beta = \left(\beta_{1}, \ldots, \beta_{n_{R}}\right)$
only depends on the state of the simulated process at the time of adaptation.
Suppose that this computation is given by functions
$L_{\alpha}, L_{\beta}: \mathbb{R}^{n} \rightarrow \mathbb{R}^{n}$, such that
for $\tau_{i}^{N} \leq t < \tau_{i+1}$ we have
$\alpha\left(t\right) = L_{\alpha} \left(W^{N}\left(\tau_{i}^{N}\right)\right)$ and $\beta\left(t\right) = L_{\beta} \left(W_{i}^{N}\left(\tau_{i}^{N}\right)\right)$.
Here $W^{N}$ denotes the adaptive version of the scaled family of processes in \eqref{eq:multiscale_random_time_change}.
We can write $W^{N}$ in the form \eqref{eq:stitched_markov_processes} with
\begin{align*}
W_{i}^{N}\left(t\right) & = W^{N}_{i}\left(0\right) + N^{-\alpha\left(\tau^{N}_{i-1}\right)}
\sum_{k=1}^{n_R}Y_{k}\left( \int_{0}^{t}N^{\beta\left(\tau^{N}_{i-1}\right)+\alpha\left(\tau^{N}_{i-1}\right)\cdot\nu_{k}}\lambda_{k}^{N}\left(W_{i}^{N}(s\right)ds\right)\xi_{k}
\end{align*}
where $N^{x} = \diag\left(N^{x_1},\ldots,N^{x_k}\right)$ for a $k$-dimensional vector $x = \left(x_1,\ldots,x_k\right)$.
The initial conditions are given by
$W^{N}_{1}\left(0\right) = N_{0}^{-\alpha\left(0\right)} X\left(0\right)$ and
$W^{N}_{i}\left(0\right) = N_{0}^{\alpha\left(\tau^{N}_{i-1}\right)-\alpha\left(\tau^{N}_{i}\right)} W_{i}^{N}\left(\tau^{N}_i\right)$ for $i=2,\ldots,m$.
From Kang et al.\cite{kang2013separation}\ we know that each $W^{N}_{i}$ converges in distribution
to a PDMP $W_{i}$, as described in Section \ref{sec:multiscale_framework}.
From the previous lemma it follows that $W^{N}\left(t\right) \Rightarrow W\left(t\right)$ for any $t \geq 0$,
where $W$ has the form \eqref{eq:stitched_PDMP_processes}
with the $W_{i}$-s given by the convergence result of Kang et al.\cite{kang2013separation}, given in \ref{sec:multiscale_framework}.

Now we return to our original process $X$ from Section \ref{sec:multiscale_framework} and recall that for large $N_{0}$
we have $X\left(t\right) = N_{0}^{\alpha} Z^{N_{0}}\left(t\right) \approx_{d} N_{0}^{\alpha} Z\left(t\right)$.
With adaptation we also have $X\left(t\right) = N_{0}^{\alpha} W^{N_{0}}\left(t\right)$ and as
$W^{N}\left(t\right) \Rightarrow W\left(t\right)$ for any $t \geq 0$, it follows that
$X\left(t\right) = N_{0}^{\alpha} W^{N_{0}}\left(t\right) \approx_{d} N_{0}^{\alpha} W\left(t\right)$ and this justifies
the use of our adaptive scheme to approximate the distribution of the original process $X$ at any fixed time $t \geq 0$.



\bibliographystyle{ieeetr}
\bibliography{arxiv_paper}


\end{document}